\documentclass[letterpaper,pre,reprint,aps,twocolumn,superscriptaddress,nofootinbib]{revtex4-2}
\makeatletter
\def\selectlanguage#1{}
\makeatother
\usepackage[T1]{fontenc}
\usepackage[utf8]{inputenc}
\usepackage[table]{xcolor}
\usepackage{amsmath, amssymb, esint, listings, afterpage, float, soul, physics, textgreek, tikz, mathtools, multirow, lipsum, lmodern, blindtext, bigints, xurl, mathtools, algorithm, algpseudocode, dcolumn, array, graphicx}

\usepackage[toc,page]{appendix}
\usetikzlibrary{quantikz2}
\usepackage{diagbox}
\usepackage{tikz}
\usepackage{amsthm}

\newtheorem{theorem}{Theorem}

\usepackage[colorlinks = true,
            linkcolor = blue,
            urlcolor  = blue,
            citecolor = blue,
            anchorcolor = blue]{hyperref}
            
\usepackage[most]{tcolorbox}
\usepackage[table]{xcolor}
\usepackage{orcidlink}
\usetikzlibrary{calc,positioning, arrows}
\usepackage{geometry}
\tcbuselibrary{most}

\tcbset{
    myalgobox/.style={
        colback=gray!5,
        colframe=gray!50,
        coltitle=white,
        fonttitle=\bfseries,
        title filled,
        colbacktitle=gray!80,
        arc=2pt,
        boxrule=0.8pt,
        left=2pt,
        right=2pt,
        top=4pt,
        bottom=4pt
    }
}

\newcommand{\lrtb}[1]{\left[#1\right]}
\newcommand{\lrfb}[1]{\left(#1\right)}

\newtheorem{thm}{Theorem}[section]

\newtheorem{lemma}[thm]{Lemma}
\newcommand{\diag}{\text{diag}}

\newcommand{\Ftel}{F^{\text{tel}}_{\text{avg}}}

\begin{document}

\title{Enhancing the teleportation fidelity of a quantum network using purification}
\author{Soumit Roy\,\orcidlink{0009-0002-4579-0491}}
\email{soumit.roy@research.iiit.ac.in}
\affiliation{Centre for Quantum Science and Technology, International Institute of Information Technology, Hyderabad, Hyderabad – 500032, Telangana, India.}

\author{Md Sohel Mondal\,\orcidlink{0009-0006-2915-2309}}
\email{mondalsohelmd@gmail.com}
\affiliation{Department of Physics and Center of excellence in Quantum information, computation, science and technology, Indian Institute of Technology Bombay, Mumbai, Maharashtra 400076, India}

\author{Siddhartha Santra\,\orcidlink{0000-0001-6695-3929}}
\email{santra@iitb.ac.in}
\affiliation{Department of Physics and Center of excellence in Quantum information, computation, science and technology, Indian Institute of Technology Bombay, Mumbai, Maharashtra 400076, India}

\author{Indranil Chakrabarty\,\orcidlink{0009-0001-0415-0431}}
\email{indranil.chakrabarty@iiit.ac.in}
\affiliation{Centre for Quantum Science and Technology, International Institute of Information Technology, Hyderabad, Hyderabad – 500032, Telangana, India.}
\affiliation{Center for Security, Theory and Algorithmic Research, International Institute of Information Technology, Hyderabad, Hyderabad – 500032, Telangana, India.}

\date{\today}

\begin{abstract}

Complex quantum networks can support a diverse set of long-range entanglement distribution schemes ranging from linear repeater protocols to multipath entanglement purification strategies. As a result, a network's resourcefulness, that is its ability to facilitate quantum communication, depends on the deployed distribution scheme. In this work, we analyse and compare the resourcefulness of quantum networks across a broad range of network topologies, including both regular and random networks, under two distinct entanglement distribution schemes. The first relies on entanglement swapping along a single path connecting a source–target pair, while the second exploits entanglement purification using multiple paths between the same source and target nodes. The resourcefulness of the network is quantified using a recently described metric \cite{mylavarapu2024teleportation} that averages over the maximum teleportation fidelity between arbitrary source-target pairs in the network. We present algorithms for estimating this metric under constraints of edge-usage and ordering of paths. Our results not only demonstrate the sensitivity of the average maximum teleportation fidelity metric to the choice of entanglement distribution protocol, but also highlight the significant improvements enabled by network purification schemes. In particular, purification-based approaches can substantially enhance average teleportation fidelity, thereby improving the overall teleportation capability of quantum networks.

\end{abstract}

\maketitle

\section{Introduction}
\noindent Entanglement-based quantum repeater networks \cite{aspelmeyer2003long,bedingtonprogress,briegel1998quantum,duan2001long,gour2004remote,sazim2013study} are foundational building blocks in establishing the quantum internet in future, which can eventually carry out various advanced distributed tasks  \cite{bennett1993teleporting,chakrabarty2010teleportation,chakrabarty2011deletion,adhikari2008quantum,sohail2023teleportation,agrawal2002probabilistic,singh2024controlled,adhikari2010probabilistic,pati2000minimum,bennett2005remote,bennett1992communication,patro2017non} between any pair of nodes. Though there are some universal limitations \cite{neumann2025no,bauml2015limitations,das2021practically} and milestones to be achieved experimentally,  it is believed that before we can have a full-fledged quantum network with advanced memory, many theoretical considerations need to be looked at to understand the network capability, not only from the source-target perspective but also globally. While large-scale classical networks \cite{albert2002statistical,watts1998collective,pastor2015epidemic,boccaletti2014structure,ji2023signal,hens2019spatiotemporal,ghavasieh2024diversity,meena2023emergent,moore2020predicting,sharma2020emergence} have been studied globally, large-scale quantum networks mostly remain unexplored.\\
\indent Quantum teleportation \cite{bennett1993teleporting,sohail2023teleportation,agrawal2002probabilistic,agrawal2010task} being  one of the fundamental protocol in transferring quantum information from one location to another,  is not restricted  to a single source and receiver; and can be implemented  across multiple sources and targets in a network \cite{mylavarapu2023entanglement,alimuddin2026entanglement, ghosal2025repeater, maiti2026}.
A state being called as a resource for teleportation goes back to the notion of quantum advantage for teleportation using a two-qubit mixed state as the shared state between the sender and receiver \cite{bennett1993teleporting,horodecki1996teleportation}. This gives rise to a resource theory framework for quantum teleportation, as we can find entangled states that are not useful for teleportation \cite{horodecki1996teleportation,chakrabarty2010teleportation,chakrabarty2011deletion}, along with the creation of a witness for teleportation \cite{ganguly2011entanglement}. Interestingly, this notion of advantage can be extrapolated in the context of quantum network \cite{mylavarapu2024teleportation,mylavarapu2023entanglement,roy2026teleportation}.  \\  
\indent One of the key features of the repeater based quantum technologies is the process of entanglement swapping \cite{bose1998multiparticle,bose1999purification, pan1998experimental,ghosal2025repeater,alimuddin2026entanglement}, which enables us to establish long-range entanglement links from short-range entanglement links between the nodes  \cite{briegel1998quantum} and thus connecting the short-range entanglement and teleportation fidelity with long-range teleportation fidelity \cite{sazim2013study, gour2004remote, guerra2025entanglement}. Entanglement swapping in repeater technology becomes handful for distant ground-based communication, where one normally has to send the qubit through fibres, which are prone to noise \cite{azuma2023quantum} (whether the transfer is done node to node or by heralded photon generation). Theoretically, the performance of a quantum repeater has been studied both analytically \cite{hartman2007, filip2018, khatri2019, brand2020, li2021, Goodenough2025,azuma2023quantum} and through simulations \cite{ferreira2021optimizing, Avis2023}. Experimental generation of entanglement between distant matter qubits located at remote nodes has been demonstrated on various physical platforms like trapped ions \cite{stephenson2020, Krutyanskiy2019, krutyanskiy2023, Krutyanskiy2023may, Schupp2021}, colour centres \cite{Bernien2013, Humphreys2018}, rare earth ions \cite{Liu2021, Lago-Rivera2021, Ruskuc2025} and atomic ensemble memories \cite{Chin2007, Yuan2008, Yu2020}.\\
\indent The average teleportation fidelity \cite{mylavarapu2024teleportation} is a global measure of the resourcefulness of a quantum network as a whole and can be used to characterize its capability for transferring quantum information, both in the presence and absence of loops. For computing the average teleportation fidelity of the network, one needs to consider all pairs of source and target. In particular, for multiple paths between a source and target, the best path's contribution is considered in the average teleportation fidelity. However, networks with loops provide alternative paths from the source to the target. Here in this article, we want to see whether the multipath entanglement purification (MPEP) method gives us a better alternative than choosing the best path by increasing the average fidelity of the network. The MPEP method creates a single path from the multiple paths between a source and a target after purification. Interestingly, we see that indeed for regular closed topologies and for large complex networks Erd\H{o}s-R\'enyi's network (ERN) purification of multiple paths between a source and target nodes to a single path can enhance the average teleportation fidelity of the network. Multipath purification can make the network more useful in terms of its ability for teleportation.\\
\indent In this article, we propose two algorithms for finding the average of teleportation fidelity via the multipath purification method. The two algorithms are based on two different scenarios: the output for two different cases, where one edge can be used only once, and one edge can be used the maximum number of times, assuming there is an ensemble of entangled states for each edge. We have also adopted two different strategies, namely the shortest path last Strategy (SPL) and the shortest path first Strategy (SPF) \cite{mondal2024} for each of these algorithms. When the network has a limited number of entangled states as resources between two nodes, the edges can not be used multiple times for different paths. We also show that the calculation of $\Ftel$ with no purification situation can be outperformed by using a single edge only once, along with a suitable purification approach for certain topologies. We consider regular topologies with loops (Ring Network (RN), Complete Graph Network (CGN), Triangular Lattice Network (TLN), Square Lattice Network (SLN)) and one of the irregular topologies (Erd\H{o}s-R\'enyi Network (ERN)). Our findings establish that the MPEP strategy is always better than the no purification strategy (better than the best path method in \cite{mylavarapu2024teleportation}) and can enhance global network metrics like average teleportation fidelity.\\
\indent In section \ref{sec2}, we briefly discuss existing concepts like quantum network model, teleportation fidelity, the average teleportation fidelity of a global network, and Deutsch's protocol for entanglement purification. In section \ref{sec3}, we elaborate on the strategies for MPEP and propose the algorithm to calculate the average of teleportation fidelity via the MPEP methods. This section also consists of finding the optimal purification strategy for maximizing the output teleportation fidelity. In section \ref{sec4}, we present the analytical derivation of the average teleportation fidelity and the results associated with it for regular topologies like Ring Network (RN), Complete Graph Network (CGN), Triangular Lattice Network (TLN), and Square Lattice Network (SLN). The section also comprises the comparison with the method without MPEP and the corresponding relative gains. In section \ref{sec5}, we briefly describe the algorithms for the irregular topology, which is the Erd\H{o}s-R\'enyi network (ERN), and present the results accordingly. Finally, in section \ref{sec6}, we conclude the article.   

\section{Preliminaries}
\label{sec2}
In this section, we briefly describe the related concepts relevant to this article. We first provide the quantum network model considered in our work. Then we present the entanglement distribution scheme used for connecting two network nodes, which establishes an end-to-end private entangled channel between the two nodes. This channel can be used for teleportation, for example. We end this section by describing the average teleportation fidelity of the network which is used as a metric to assess the quality of the quantum network in the rest of our paper.
\subsection{Quantum Network Model}
Let us consider $G_Q(N,L)$ to be a quantum network (QN), where $N$ is the number of vertices and $L$ is the number of edges representing the physical channels present in the network which are used to generate elementary entangled links between the adjacent nodes. 
In our model each link is represented by a bipartite isotropic state characterized by its visibility $p$,  and given by,
\begin{eqnarray}
    \rho(p) &=& \frac{1-p}{4}\mathbb{I}+p\dyad{\Phi^+}.
    \label{eq:isotropic_state}
\end{eqnarray}
The fidelity of the above entangled state $\rho(p)$ with respect to the Bell state $\ket{\Phi^+}=(\ket{00}+\ket{11})/\sqrt{2}$, can be expressed as $f=\mel{\Phi^+}{\rho}{\Phi^+}=\frac{1}{4}(1+3p)$ with $0\leq f \leq 1$. Importantly, the isotropic state is entangled for $p>\frac{1}{3}$ and $f>\frac{1}{2}$.  

Physically, the isotropic state is an outcome of a depolarizing channel where the probability of becoming a maximally mixed state (completely noisy) is $1-p$ and $p$ is the probability that the state remains unaffected. In a real network based on photonic qubits, the parameter $p$ (as well as $f$) may vary across the edges based on different factors \cite{Ferreira2024}, leading to a distribution of $p \in [0,1]$ over the entire network. In our work for simplicity, we assume a uniform value of $p$ distributed over the network.

\subsection{Entanglement Distribution Scheme in a QN}
In a connected quantum network, any two network nodes are connected via at least one network path consisting of multiple intermediate nodes. In complex quantum networks, there might exist multiple, alternative and distinct (MAD) network paths between two nodes. In such a network, the aim is to distribute an end-to-end entanglement connection between two nodes, which acts as a private channel for communication. The distribution of end-to-end entanglement however, requires utilization of quantum operations such as entanglement swapping (ES) \cite{swapping_sohel} and entanglement purification (EP) \cite{deutsch1996} with the elementary entangled states along the edges of the network paths. There can exist a plethora of entanglement distribution schemes with different ways of implementation of the two network operations \cite{briegel1998quantum, pant_routing, victora_routing,alejandra_routing,abane_routing}. However, we choose the multipath entanglement purification (MPEP) \cite{purif_bala} for entanglement distribution in our model.

ES is an LOCC protocol that establishes entanglement between two parties which have never interacted with each other. This is a necessary operation to increase the range of entanglement in a QN. Along a network path between two intended nodes $S$ and $T$, ES is performed at the intermediate nodes, which distributes an entangled connection between $S$ and $T$, see Fig. \ref{fig:ES}. Consider a network path of length $\ell$ and the entangled state at $i$th edge of the path is an isotropic state characterized by visibility $p_i$ and fidelity $f_i$. Then the end-to-end state distributed between $S$ and $T$ after swapping at the intermediate nodes will be of the form of Eq.~(\ref{eq:isotropic_state}) and its visibility and fidelity can be expressed as,
\begin{align}
    p_\text{ES}^{(\ell)}=\prod_{i=1}^\ell p_i,~ f_\text{ES}^{(\ell)}=\frac{1}{4}+\frac{3}{4}\prod_{i=1}^\ell \frac{4f_i-1}{3}.
\end{align}
Note that the visibility and the fidelity of the post swapped end-to-end entangled state are lower than those of the initial states along the edges of the network path. In other words, ES degrades the quality of the distributed entangled state. In the MPEP scheme, EP is therefore performed with multiple end-to-end entangled states between $S$ and $T$ distributed via MAD paths to enhance the quality of the distributed entangled state, see Fig. \ref{fig:CG_schm}. The MPEP protocol is accomplished via the following steps:
\begin{enumerate}
    \item Find $z$ multiple, alternative and distinct network paths $\{\mathcal{R}_1,\mathcal{R}_2,\dots,\mathcal{R}_z\}$ between $S$ and $T$.
    \item Establish $z$ end-to-end states $\{\rho_1^{ST},\rho_2^{ST},\dots,\rho_z^{ST}\}$ between $S$ and $T$ via swapping at the intermediate nodes along each path.
    \item Perform sequential EP with the states distributed in step 2 to obtain a high quality end-to-end state after $(z-1)$ rounds of purifications.
\end{enumerate}

\begin{figure}
    \centering
    \includegraphics[scale=0.18]{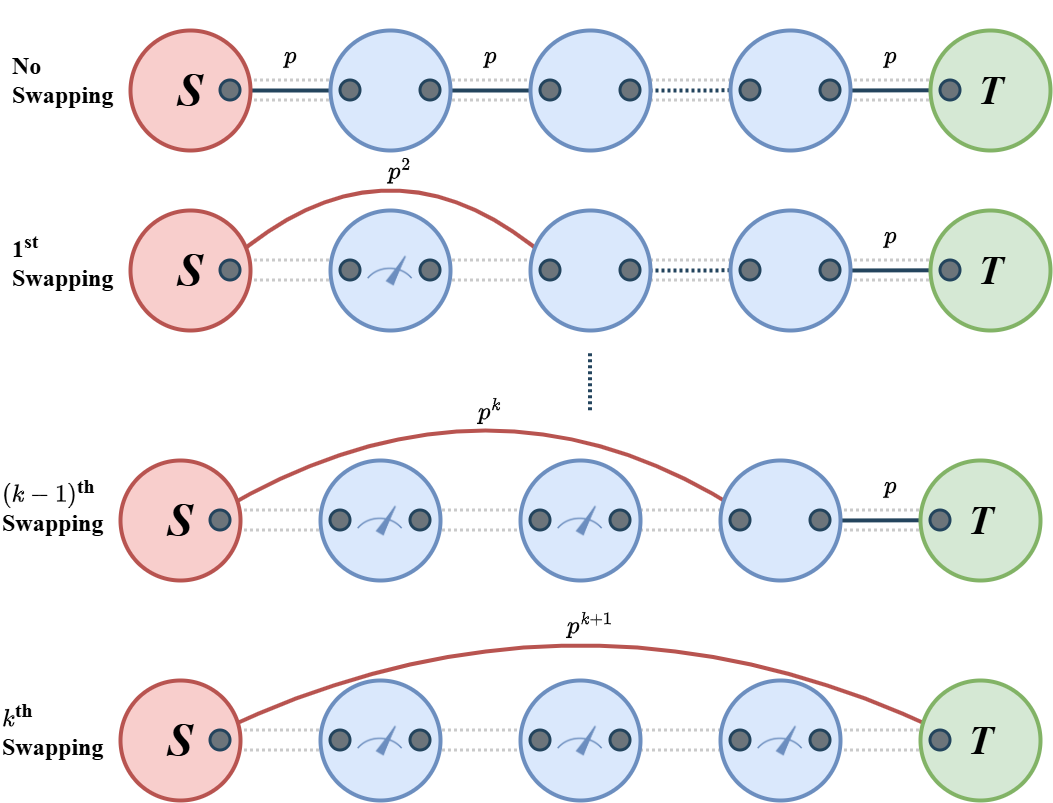}
    \caption{Step-by-step Entangled Swapping scenario: The source ($S$) and the target ($T$) are not initially connected directly. However, there are intermediate repeater stations in between them. After $k$\textsuperscript{th} swapping, there is a connection between $S$ and $T$ via an isotropic state with parameter $p^{k+1}$.}
    \label{fig:ES}
\end{figure}

We now describe the EP protocol proposed by Deutsch et. al. \cite{deutsch1996} used for sequential purification of the entangled states distributed along multiple paths in the MPEP scheme discussed above. In this purification protocol, two Bell diagonal states with fidelities $f_1,f_2>\frac{1}{2}$ are considered to be shared between two parties, say Alice and Bob. They apply bilateral CNOT gates locally by choosing one entangled pair as source and the other pair as target. Then they measure the target qubits in the computational basis and communicate the measurement outcomes via a classical channel. If the outcomes of Alice and Bob match, they keep the source pair, which will be a Bell diagonal state with average fidelity expressed as,
\begin{align}
    f_\text{EP}=\mathcal{P}_D(f_1,f_2)=\frac{10f_1f_2-f_1-f_2+1}{8f_1f_2-2f_1-2f_2+5},
\end{align}
where $\mathcal{P}_D$ is the function representing the purification of two Bell diagonal states of fidelities in the arguments. The purification operation has the following properties \cite{deutsch1996, purif_dur_review, mondal2024},
\begin{itemize}
    \item $\mathcal{P}_D(f_1,f_2)$ is strictly monotonically increasing with respect to both arguments for fidelities in the domain $(\frac{1}{2},1)$.
    \item $\mathcal{P}_D(f_1,f_2)$ is always greater than the minimum of $f_1$ and $f_2$, i.e., $\mathcal{P}_D(f_1,f_2)>\min(f_1,f_2)$ for $f_1,f_2\in(\frac{1}{2},1)$.
    \item Purification is commutative, i.e., $\mathcal{P}_D(f_1,f_2)=\mathcal{P}_D(f_2,f_1)$
    \item Purification is not associative, i.e., $\mathcal{P}_D(\mathcal{P}_D(f_1,f_2),f_3)\neq\mathcal{P}_D(f_1,\mathcal{P}_D(f_2,f_3))$.
\end{itemize}
The last property of the purification protocol is of particular interest to us. The non-associativity indicates that the output fidelity after sequential purification of multiple entangled states depends on the order of the fidelities of the states in which they are purified.

\subsection{Teleportation fidelity and Average Teleportation fidelity in a network}
Teleportation is the well-celebrated quantum operation for transferring quantum information from one location to another without directly sending the information qubit with the help of a shared entangled state. However, teleportation is not only restricted to quantum information but can also be extended to the process of transfer of quantum coherence  \cite{sohail2023teleportation}. If the Bell state \cite{bennett1993teleporting} is shared between the sender and receiver, perfect teleportation of the quantum state is possible. However, a shared state is called to be a resource for teleportation if one can teleport the unknown qubit with a fidelity better than what can be best achieved classically \cite{horodecki1996teleportation}. For a general two-qubit mixed state,
\begin{equation}
 \rho= \frac{1}{4}(\mathbb{I}\otimes \mathbb{I} +\sum_{i=1}^3 r_i \sigma_i \otimes \mathbb{I} + \sum_{j=1}^3 s_j \mathbb{I} \otimes \sigma_j +\sum_{i,j} t_{ij} \sigma_i \otimes \sigma_j ) ,  
\end{equation}
the teleportation fidelity is given  by $F_{\max}=\frac{1}{2}(\frac{1}{3} + N(\rho))$, where $N(\rho)= \Tr(\sqrt{T^{\dagger}T})$, with $T= \Tr(\rho(\sigma_i \otimes \sigma_j))$ being the correlation matrix and $\sigma_i$ being the Pauli matrices. Here, $r_i$ and $s_j$ are the local Bloch vectors. It is interesting to note that classically, without sharing an entangled state, the receiver can estimate the entanglement of the unknown qubit on the source's side by integrating over the Bloch sphere with a fidelity $\frac{2}{3}$ \cite{horodecki1996teleportation}. Any shared entangled state that helps us to obtain a fidelity greater than $\frac{2}{3}$ will be called as a resource and shows quantum advantage. It is important to note that not all entangled states are useful for teleportation, as there will be an entangled state for which the teleportation fidelity is below $\frac{2}{3}$ \cite{horodecki1996teleportation,chakrabarty2010teleportation,chakrabarty2011deletion,adhikari2008quantum}. As a result, not all entangled witness operators will act as a teleportation witness \cite{ganguly2011entanglement}. For the isotropic state given in Eq. \eqref{eq:isotropic_state}, the state is useful for teleportation whenever $p>\frac{1}{3}$, which is the same range for the state to be entangled. \\
\indent If we consider a repeater based quantum network with adjacent elementary links already established, and then consider teleportation between any source $S$ node and a target $T$, we need to do the entanglement swapping if they are not connected, as mentioned in the previous subsection.  This will eventually give rise to a desired link between any source and a target. Now, if we want to calculate the average fidelity of teleportation of the entire network as defined in \cite{mylavarapu2024teleportation}, we have to take the average over all possible pairs of source and target. If $S$ and $T$ are connected via multiple paths, let $\mathcal P_{\rm max}$ be the path with the maximum fidelity. We get the average of the highest-achievable teleportation fidelity if we take the average of $F^{\rm max}_{{\rm ST},\mathcal P_{\rm max}}(\rho)$ over all possible combinations of $S$ and $T$ (i.e., any pair of nodes can be the source and the target):
\begin{equation}
F^{\rm max}_{\rm avg}(\rho) = \langle F^{\rm max}_{{\rm ST},\mathcal P_{\rm max}}(\rho)\rangle_{{\rm ST}}= \langle F^{\rm max}_{\mathcal P}(\rho)\rangle_{\mathcal P}.   \end{equation}
Since different topologies have different path lengths, the expression for the average of maximum teleportation fidelity will be different for different topologies. When the intermediate links are not isotropic states but some other noise state, the efficient protocol proposed in \cite{ghosal2025repeater} can be implemented to ensure the optimal teleportation fidelity.

\section{Multipath Entanglement Purification in a Quantum Network}
\label{sec3}
In this section, we propose two different algorithmic frameworks when edges are available for one time and multiple times for entanglement purification, followed by the identification of an optimal purification strategy out of these two methods for a quantum network. 

\subsection{Strategies for MPEP and Calculation of Teleportation Fidelity}
Here we describe how to implement a multipath purification strategy based on \cite{mondal2024}. We consider a homogeneous QN in which entangled states are distributed along $r_{\max}$ distinct paths\footnote{We denote the path of length $r$ by $l_r$. The terms ``path $l_r$" and ``path of length $l_r$" are equivalent, and we use these terms interchangeably throughout the article.} between any source-target pair $(S,T)$. In this scenario, we can acquire different purification methods; one such method is denoted by $\mathcal{M}(l_{r_1}, \hdots l_{r_{\max}})$. In this method, the states are distributed along the first two paths
are purified initially. The resultant purified state will be successively purified with the state distributed along the next path. 
Specifically, at each step, the output of the previous purification is combined with the state corresponding to the next path in the sequence. This iterative process continues till the final path 
is incorporated, requiring a total of $r_{\max}-1$ purification steps. Schematically,

\begin{eqnarray} 
\mathcal{M}(l_{r_1}, \hdots, l_{r_{\max}}) :=\underbrace{\underbrace{\underbrace{\underbrace{l_{r_1} \oplus l_{r_2}}_{\text{1\textsuperscript{st} Purification}} \oplus l_{r_3}}_{\text{2\textsuperscript{nd} Purification}} \oplus \hdots \oplus l_{r_i}}_{\text{$(i-1)$\textsuperscript{th} Purification}} \oplus \hdots
 \oplus l_{r_{\max}}}_{\text{$(r_{\max}-1)$\textsuperscript{th} Purification}}. \nonumber
\end{eqnarray} 
%
The teleportation fidelity can then be evaluated using Eq. \eqref{eq:A3} in the appendix \ref{appA}, once the final state corresponding to the $(S,T)$ is obtained via the purification method $\mathcal{M}$. We denote this as
\begin{eqnarray}
    F^{\mathrm{tel}}_{S,T}(\mathcal{M}(l_{r_1}, \hdots, l_{r_{\max}})) := F^{\mathrm{tel}}_{S,T}(l_{r_1}, \hdots, l_{r_{\max}}).
\end{eqnarray}
For example, if there are paths of lengths $l_1,l_2$ and $l_3$ between a $(S,T)$ pair, then $F^{\mathrm{tel}}$ of the final state after purification is denoted as $F^{\mathrm{tel}}(l_1, l_3, l_2)$ for a method $\mathcal{M}(l_1, l_3,l_2)$. In this article, we consider two different purification strategies.\\ \\
\noindent \textbf{Shortest path first (SPF) strategy:} In this strategy, we identify the states associated with the shorter paths initially, which suggests the states related to the shorter paths will be purified at first. Eventually, we iteratively purify the rest up to the final. \\
\indent For the mentioned example, the teleportation fidelity of the state obtained at the end of method $\mathcal{M}(l_1, l_2, l_3)=l_1 \oplus l_2\oplus l_3$ is given by,
\begin{eqnarray}
    &&F^{\mathrm{tel}}(l_1, l_2, l_3) \nonumber\\ 
    &=& \frac{3+6p^3+2p^4+2p^5+11p^6}{6+6p^3+12p^6} \nonumber\\
    &=& \frac{11-5f-5f^2+12f^3-10f^4-10f^5+88f^6}{3(7-f-f^2+6f^3-8f^4-8f^5+32f^6}.
\end{eqnarray}
This formula can be iteratively extended by integrating additional paths.\\

\noindent \textbf{Shortest path last (SPL) strategy:} In the case of this strategy, we first identify all possible paths and the states distributed along them. Subsequently, we purify the states associated with the longer paths at first, and iteratively purify the next up to the final. \\
\indent For the mentioned example, the teleportation fidelity of the state obtained at the end of method $\mathcal{M}(l_1, l_2, l_3)=l_3\oplus l_2\oplus l_1$ is given by,
\begin{eqnarray}
    && F^{\mathrm{tel}}(l_3, l_2, l_1) \nonumber \\ 
    &=& \frac{3+p+2p^3+2p^4+5p^5+11p^6}{6+6p^5+12p^6} \nonumber\\
    &=& \frac{11+4f-5f^2-15f^3-10f^4+8f^5+88f^6}{3(7+2f-f^2-9f^3-8f^4+4f^5+32f^6}.
\end{eqnarray}
By incorporating more paths, this formula can be expanded iteratively.
\subsection{Proposed Algorithm for calculation of the Average of Teleportation Fidelity}
We now examine whether it is possible to implement the multipath purification strategy for a homogeneous quantum network, and to calculate $\Ftel$.
In this view, we propose two algorithms based on the total number of uses of one edge. The use of each edge $k$ times indicates that, for a fixed source-target pair $(S,T)$ and while considering all paths $\gamma \in \mathcal{P}_{S,T}$, no edge is utilized by different paths at most than $k$ times.\\
 %
\indent Algorithm \ref{alg:fidelity1} corresponds to the scenario in which each edge in the network can be used at most once.
We portray this algorithm in Fig. \ref{fig:CG_schm}(a), where we consider a complete graph network with 4 nodes. For a given $(S,T)=(1,3)$, we first identified $z$ paths based on the fact that one edge can be used only once. Here, we choose $z=3$, which is the maximum number of paths. Then, after the entanglement swapping process on each path, we end up with the end-to-end states associated with each path. Finally, the purification process is performed to reach the final state, after which we can calculate the teleportation fidelity. Finally, the $\Ftel$ can be calculated by accounting for all possible source and target pairs, which is equal to $F^{\text{tel}}_{S,T}$ for the complete graph.\\
\indent  Algorithm \ref{alg:fidelity2} allows all edges to be used multiple times. This algorithm is depicted in Fig. \ref{fig:CG_schm}(b), where we consider the same complete graph network with 4 nodes. For a given $(S,T)=(1,3)$, we first identified $z$ paths based on the fact that one edge can be used a maximum number of times, while other steps remain the same as the previous one. Here we take $z=5$, which is the maximum number of paths that can be found. Finally, the $\Ftel$ is calculated by accounting for all possible source and target pairs, which is again equal to $F^{\text{tel}}_{S,T}$ for the complete graph.

\begin{tcolorbox}[myalgobox, title=Algorithm 1: Calculation of \ensuremath{F_{\mathrm{avg}}^{\mathrm{tel}}} via SPL (SPF) purification Strategy (\ensuremath{k = 1})]
\begin{algorithmic}[1]

\State Identify all source and target node pairs $(S,T)\in V$
\label{alg:fidelity1}
\ForAll{$(S,T) \in V$}
    \State Find a total of $z$ number of paths $\mathcal{P}_{S,T}$ between $S$ and $T$ where one edge can be used only once ($k$=1)
    \State Sort $\mathcal{P}_{S,T}$ in descending (ascending) order of path length
    
    \ForAll{path $\gamma \in \mathcal{P}_{S,T}$}
        \State Perform entanglement swapping along path $p$
        \State Compute the resulting final quantum state
    \EndFor
    \State Perform entanglement purification between the two states corresponding to the longest (shortest) path lengths
    \State Compute the teleportation fidelity $F^{\text{tel}}_{S,T}$
\EndFor
\State Compute the average of maximum teleportation fidelity:
$$
F_{\mathrm{avg}}^{\mathrm{tel}} = \frac{1}{|V|} \sum_{\substack{S, T \in V \\ S \neq T}} F^{\text{tel}}_{S,T}
$$
\end{algorithmic}
\end{tcolorbox}
\begin{tcolorbox}[myalgobox, title=\textbf{Algorithm 2}: Calculation of \ensuremath{F_{\mathrm{avg}}^{\mathrm{tel}}} via SPL (SPF) purification Strategy (\ensuremath{k = k_{\mathrm{max}}})]

\begin{algorithmic}[1]
\State Identify all source and target node pairs $(S,T)\in V$
\ForAll{$(S,T) \in V$}\label{alg:fidelity2}
    \State Find a total of $z$ number of paths $\mathcal{P}_{S,T}$ between $S$ and $T$ where one edge can be used maximum times ($k = k_{\mathrm{max}}$)
    \State Sort $\mathcal{P}_{S,T}$ in descending (ascending) order of path length
    
    \ForAll{path $\gamma \in \mathcal{P}_{S,T}$}
        \State Perform entanglement swapping along path $p$
        \State Compute the resulting final quantum state
    \EndFor
    
    \State Perform entanglement purification between the two states corresponding to the longest (shortest) path lengths
    \State Compute the teleportation fidelity $F^{\text{tel}}_{S,T}$
\EndFor

\State Compute the average of maximum teleportation fidelity:
$$
F_{\mathrm{avg}}^{\mathrm{tel}} = \frac{1}{|V|} \sum_{\substack{S, T \in V \\ S \neq T}} F^{\text{tel}}_{S,T}
$$
\end{algorithmic}

\end{tcolorbox}
\begin{figure*}        
    \centering
    \includegraphics[width=0.85\linewidth]{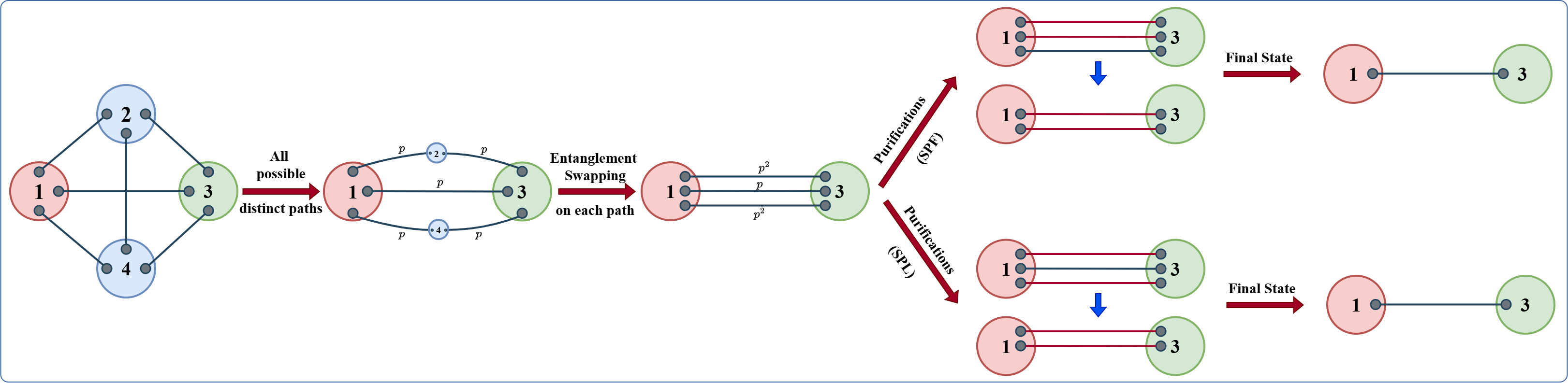}
    \\[1ex]
    (a) For $k = 1$.
    \\[1ex]
    \includegraphics[width=0.85\linewidth]{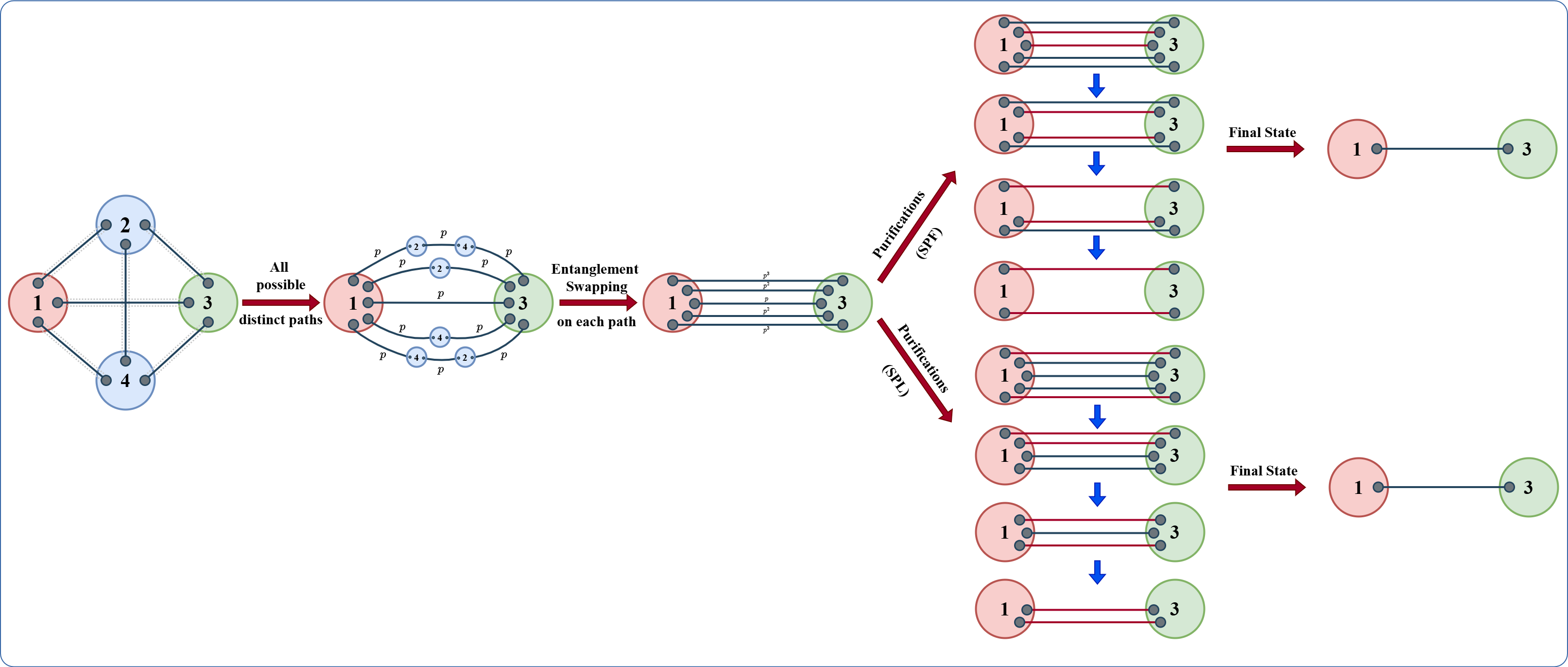}
    \\[1ex]
    (b)  For $k = k_{\max}$.
    \caption{Algorithm for evaluating $\Ftel$ using the SPF and SPL purification strategies ($k=1$ and $k=k_{\max}$) for a complete graph with $N=4$. The procedure is illustrated for a single source–target pair, $(S,T)=(1,3)$. The remaining pairs can be treated analogously, and the $\Ftel$ is obtained by averaging over all pairs.}
    \label{fig:CG_schm}
\end{figure*}

%


    
    



\subsection{Comparison of SPF and SPL MPEP Strategies} 
In this section, we provide the superior strategy among SPF and SPL for entanglement purification in the quantum network. We first provide the superior sequence for purification of $n$ quantum states with fidelities $f_1>f_2>\dots>f_n$ in the following Lemma \ref{lemma_purification}. Then we provide the superior entanglement purification strategy in the Theorem \ref{thm1}. Now let us introduce a notation for denoting the output fidelity with a given sequence of purification as, $\mathcal{L}(x_1,x_2,\dots,x_n)=\mathcal{P}_D(\dots\mathcal{P}_D(\mathcal{P}_D(x_1, x_2), x_3), \dots, x_n)$, where states with fidelities $x_1$ and $x_2$ are used in first purification step, then $x_3$ is purified with the output of the first purification, and so on upto $x_n$.
\begin{lemma}
    In a sequential purification process with more that two Bell diagonal states, the sequence with increasing order of fidelities will lead to higher output fidelity than the sequence with decreasing order of fidelities. Mathematically,
\begin{align}
    \mathcal{L}(f_n,f_{n-1},\dots,f_1)>\mathcal{L}(f_1,f_2,\dots,f_n)
\end{align}
\label{lemma_purification}
\end{lemma}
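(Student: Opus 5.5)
The plan is to reduce the statement to a single two-step \emph{exchange inequality} and then use a bubble-sort argument. Write $\mathcal{P}_D(x,y)=N(x,y)/D(x,y)$ with $N=10xy-x-y+1$ and $D=8xy-2x-2y+5$. The exchange inequality I would aim for is: for every $g\in(\tfrac12,1)$ and every $\tfrac12<y<x<1$,
\begin{align}
\mathcal{P}_D(\mathcal{P}_D(g,y),x) > \mathcal{P}_D(\mathcal{P}_D(g,x),y).
\end{align}
In words, when the running output $g$ is to be purified with two further states, it is better to use the higher-fidelity state last.

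Assuming this holds, the lemma follows by adjacent transpositions. Start from the decreasing sequence $\mathcal{L}(f_1,f_2,\dots,f_n)$.
\begin{itemize}
\item By commutativity of $\mathcal{P}_D$, the first two entries may be swapped freely.
\item For positions $i,i+1$ with $i\ge 2$, let $g$ be the output after the first $i-1$ entries (for $i=2$, $g$ is just the first entry). Swapping a larger entry at position $i$ with a smaller one at position $i+1$ strictly increases the fidelity after step $i+1$, by the exchange inequality.
\item Every later step is strictly increasing in its first argument (the monotonicity property listed earlier), so the strict increase propagates to the final output.
\item Each intermediate fidelity stays in $(\tfrac12,1)$, since $\mathcal{P}_D(f_1,f_2)>\min(f_1,f_2)>\tfrac12$ and $\mathcal{P}_D<1$ on this domain. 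So the hypotheses are preserved at every stage.
\end{itemize}
Bubble-sorting the decreasing sequence into the increasing one then gives a chain of inequalities. At least one step is a genuine swap at positions $\ge2$ with strict inequality, precisely because $n\ge3$. This yields $\mathcal{L}(f_n,\dots,f_1)>\mathcal{L}(f_1,\dots,f_n)$, and in fact shows that increasing order is optimal among all orderings.

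To prove the exchange inequality, set $h_y=\mathcal{P}_D(g,y)$ and $h_x=\mathcal{P}_D(g,x)$. The first step is to observe that $D>0$ on the domain (indeed $D=8(x-\tfrac14)(y-\tfrac14)+\tfrac92$). I would then clear denominators in $\mathcal{P}_D(h_y,x)-\mathcal{P}_D(h_x,y)$, multiplying through by $D(g,y)D(g,x)$ so everything is polynomial in $g,x,y$. The resulting numerator is antisymmetric under $x\leftrightarrow y$, so it factors as $(x-y)\,Q(g,x,y)$ with $Q$ symmetric. It then remains to show $Q>0$ on $(\tfrac12,1)^3$. For this I would substitute $g=\tfrac12+a$, $x=\tfrac12+b$, $y=\tfrac12+c$ with $a,b,c\in(0,\tfrac12)$, and hope that $Q$ becomes a polynomial with nonnegative coefficients plus a positive constant. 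Failing that, I would check positivity on the faces of the cube and argue by multilinearity or degree in each variable.

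The main obstacle is establishing $Q>0$ uniformly, or even confirming its sign. Since $\mathcal{P}_D$ is non-associative and nonlinear, the sign of $Q$ might depend on $g$, for instance failing when $g$ is very close to $1$. If it is not uniformly positive, I would fall back on the structure the lemma actually needs. In the swaps above, $g$ is always an output of purifications involving the smaller values, so one can try to prove the inequality only in the region it is used, for example $g\le\mathcal{P}_D(x,x)$ or $g$ bounded in terms of the entries already processed.

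If that also fails, I would abandon swapping and compare the two extreme orders directly by induction on $n$. The inductive hypothesis would be the pair of bounds $\mathcal{L}(f_n,\dots,f_2)\ge\mathcal{L}(f_2,\dots,f_n)$ together with an auxiliary inequality $\mathcal{P}_D(A,f_1)>\mathcal{P}_D(B,f_n)$, where $A=\mathcal{L}(f_n,\dots,f_2)$ and $B=\mathcal{L}(f_1,\dots,f_{n-1})$. The auxiliary inequality would be established using monotonicity in both arguments and $f_1>f_n$.
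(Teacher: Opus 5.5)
Your proposal is correct, and it reaches general $n$ by a different route than the paper. The step you flag as the main obstacle is not actually an obstacle. By commutativity, $\mathcal{P}_D(\mathcal{P}_D(g,y),x)=\mathcal{L}(y,g,x)$ and $\mathcal{P}_D(\mathcal{P}_D(g,x),y)=\mathcal{L}(x,g,y)$. These are exactly the paper's $F^\uparrow$ and $F^\downarrow$ with $(f_1,f_2,f_3)=(x,g,y)$. The paper's identities $N^\uparrow-N^\downarrow=6(x-y)(1-g)$ and $D^\uparrow-D^\downarrow=6(x-y)(1-4g)$ are polynomial identities, and they never use where $g$ sits relative to $x$ and $y$. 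After your multiplication by $D(g,y)D(g,x)$ you still have two quotients, $N^\uparrow/D^\uparrow$ and $N^\downarrow/D^\downarrow$, so you must also cross-multiply. Doing so gives $N^\uparrow D^\downarrow-N^\downarrow D^\uparrow=6(x-y)\bigl[(1-g)D^\downarrow+(4g-1)N^\downarrow\bigr]$. Hence your $Q$ equals $6\bigl[(1-g)D^\downarrow+(4g-1)N^\downarrow\bigr]$. This is positive for every $g\in(\tfrac12,1)$, because $N^\downarrow=D(x,g)\,N(\mathcal{P}_D(x,g),y)>0$ and $D^\downarrow>0$. So none of your fallbacks is needed. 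The last fallback would also not follow from monotonicity alone: already for $n=3$ one has $A=\mathcal{P}_D(f_3,f_2)<\mathcal{P}_D(f_1,f_2)=B$, so the two arguments move in opposite directions.

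The paper lifts the $n=3$ case differently. It reverses consecutive blocks of three and propagates with first-argument monotonicity, giving $\mathcal{L}(f_1,\dots,f_n)<\mathcal{L}(X_3,f_4,\dots,f_n)<\mathcal{L}(f_5,f_4,X_3,f_6,\dots,f_n)<\cdots$. As written, this chain leaves the class of sequential purifications. $\mathcal{L}(f_5,f_4,X_3,\dots)$ purifies $\mathcal{P}_D(f_5,f_4)$ against the already-purified $X_3$, which is not $\mathcal{L}(f_5,f_4,f_3,f_2,f_1,\dots)$. The stated identity $X_{k+1}=\mathcal{L}(f_{k+1},X_k)$ fails by non-associativity. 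So the closing ``equivalent'' step does not actually reach $\mathcal{L}(f_n,\dots,f_1)$.

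Your adjacent-transposition argument avoids this problem. Every intermediate object is a sequential purification of a permutation of the original list. The only inputs are the uniform exchange inequality, commutativity for the first swap, and monotonicity in the first argument. You also get a stronger statement: increasing order is optimal among all orderings, up to the harmless swap of the first two entries. In short, the paper supplies the explicit three-state algebra, and your exchange/bubble-sort framework turns it into a valid proof for all $n\ge 3$.
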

\begin{proof} 
To prove the above lemma we first prove the inequality for three fidelities ($n=3$). Consider three states with fidelities $f_1,f_2,f_3\in(\frac{1}{2},1)$ such that $f_1>f_2>f_3$. If we perform sequential purification with these states, then we want to prove,
\begin{align}
    \mathcal{L}(f_3,f_2,f_1)>\mathcal{L}(f_1,f_2,f_3).
    \label{ineq_pur}
\end{align}
The output fidelities after performing sequential purification in the two different orders of the fidelities can be obtained as,
\begin{align}
    F^\uparrow &= \mathcal{L}(f_3,f_2,f_1)=\frac{N^\uparrow}{D^\uparrow}\nonumber\\
    F^\downarrow &= \mathcal{L}(f_1,f_2,f_3)=\frac{N^\downarrow}{D^\downarrow},
\end{align}
where the quantities in the numerator and the denominator of the two expressions are given by,
\begin{align}
N^{\uparrow} = 92f_1f_2f_3 &- 2f_2f_3 - 8f_1f_2 \nonumber\\&- f_2 - 8f_1f_3 - f_3 + 5f_1 + 4, \nonumber \\
D^{\uparrow} = 64f_1f_2f_3 &+ 20f_2f_3 - 4f_1f_2 \nonumber\\&- 8f_2 - 4f_1f_3 - 8f_3 - 2f_1 + 23, \nonumber\\
N^{\downarrow} = 92f_1f_2f_3&-2f_1f_2-8f_1f_3-f_1\nonumber\\&-8f_2f_3-f_2+5f_3+4, \nonumber\\
D^{\downarrow} = 64f_1f_2f_3 &+ 20f_1f_2 - 4f_1f_3\nonumber\\& - 8f_1 - 4f_2f_3 - 8f_2 - 2f_3 + 23.
\end{align}
The difference of the two numerators turns out to be $N^\uparrow-N^\downarrow=6(f_1-f_3)(1-f_2)$. Since $f_1>f_3$ and $f_2\in (\frac{1}{2},1)$, we have $(f_1-f_3)>0$ and $(1-f_2)>0$ and hence $N^\uparrow>N^\downarrow$. Similarly, $D^\uparrow-D^\downarrow=6(1-4f_2)(f_1-f_3)$, where $(1-4f_2)<0$ and $(f_1-f_3)>0$ and hence, $D^\uparrow<D^\downarrow$. The two inequalities $N^\uparrow>N^\downarrow$ and $D^\uparrow<D^\downarrow$ together prove that,
\begin{align}
    F^\uparrow = \frac{N^\uparrow}{D^\uparrow}>\frac{N^\downarrow}{D^\downarrow} = F^\downarrow.
\end{align}
This completes the proof of Ineq.~(\ref{ineq_pur}). We further need to show a property of the purification sequence to prove the lemma. 

Consider the fidelities $x_1,x_2,\dots,x_n\in(\frac{1}{2},1)$ and another two fidelities $a,b\in(\frac{1}{2},1)$ such that $a>b$. Then it holds that,
\begin{align}
    \mathcal{L}(a,x_1,\dots,x_n)>\mathcal{L}(b,x_1,\dots,x_n).
    \label{ineq_2}
\end{align}
We prove the above relation by mathematical induction. First note that,
\begin{align}
\mathcal{L}(a,x_1,\dots,x_n)=\mathcal{L}(\mathcal{P}(a,x_1),x_2,\dots,x_n).
\end{align}
For $n=1$, the result holds since $\mathcal{P}(a,x_1)>\mathcal{P}(b,x_1)$, which follows from the monotonicity of purification. Now assume that the statement is true for $n$. For $n+1$, we write,
\begin{align}
\mathcal{L}(a,x_1,\dots,x_{n+1})=\mathcal{L}(\mathcal{L}(a,x_1,\dots,x_n),x_{n+1}).
\end{align}
From the induction hypothesis,
\begin{align}
\mathcal{L}(a,x_1,\dots,x_n)>\mathcal{L}(b,x_1,\dots,x_n).
\end{align}
Since for two arguments $\mathcal{L}(A,x)=\mathcal{P}(A,x)$, and $\mathcal{P}$ is strictly increasing in its first argument, it follows that,
\begin{align}
\mathcal{L}(\mathcal{L}(a,x_1,\dots,x_n),x_{n+1})>\mathcal{L}(\mathcal{L}(b,x_1,\dots,x_n),x_{n+1}).
\end{align}
This proves the result for $n+1$, and hence the statement holds for all $n$. Using the Ineq.~(\ref{ineq_pur}) and the property provided by Ineq.~(\ref{ineq_2}), we now provide the proof of the lemma.

Let us introduce the notation, $X_k=\mathcal{L}(f_k,f_{k-1},\dots,f_1)$ and note that, $X_{k+1}=\mathcal{L}(f_{k+1},X_{k})$. Now consider the quantity $\mathcal{L}(f_1,f_2,f_3,\dots,f_n)=\mathcal{L}(\mathcal{L}(f_1,f_2,f_3),\dots,f_n)$ of the fidelities $f_1>f_2>f_3>\dots>f_n$. Using Ineq.~(\ref{ineq_pur}) we get $\mathcal{L}(f_1,f_2,f_3)<\mathcal{L}(f_3,f_2,f_1)$. Now we utilize Ineq.~(\ref{ineq_2}) to obtain,
\begin{align}
    \mathcal{L}(f_1,f_2,f_3,\dots,f_n)<\mathcal{L}(f_3,f_2,f_1,\dots,f_n).
\end{align}
Note that, $\mathcal{L}(f_3,f_2,f_1,\dots,f_n)=\mathcal{L}(X_3,f_4,f_5,\dots,f_n)=\mathcal{L}(\mathcal{L}(X_3,f_4,f_5),\dots,f_n)$ where $X_3>f_4>f_5$ as $X_3>f_3$ due to the second property of purification given in Sec.~\ref{sec2}B. Therefore we can again use Ineq.~(\ref{ineq_pur}) to obtain,
\begin{align}
    \mathcal{L}(X_3,f_4,f_5,\dots,f_n)<\mathcal{L}(f_5,f_4,X_3,\dots,f_n).
\end{align}
Repeating this process, we finally obtain,
\begin{align}
    \mathcal{L}(X_{n-2},f_{n-1},f_n)<\mathcal{L}(f_n,f_{n-1},X_{n-2}),
\end{align}\
which is equivalent to writing,
\begin{align}
    \mathcal{L}(f_n.f_{n-1},\dots,f_1)>\mathcal{L}(f_1,f_2,\dots,f_n).
\end{align}
This completes the proof of Lemma \ref{lemma_purification}.
\end{proof}

Now we provide the superior strategy among LPS and SPS for entanglement purification in a quantum network in the theorem below.
\begin{theorem}
    In a homogeneous quantum network where all the network edges are distributed with identical entangled states, the SPL strategy provides higher output fidelity than the SPF strategy.
    \label{thm1}
\end{theorem}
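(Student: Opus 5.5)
The plan is to reduce Theorem~\ref{thm1} to Lemma~\ref{lemma_purification} by translating path lengths into fidelities. Fix a source--target pair $(S,T)$ and the set of $z$ paths $\{\mathcal{R}_1,\dots,\mathcal{R}_z\}$ between them, found under the given edge-usage rule ($k=1$ or $k=k_{\max}$). Both SPF and SPL use this same set; they differ only in the order of purification. Because the network is homogeneous, every edge carries $\rho(p)$ with $p\in(\frac13,1)$. By the swapping formula of Sec.~\ref{sec2}, a path of length $\ell$ then delivers an isotropic state with visibility $p^\ell$ and fidelity
\begin{align}
g(\ell)=\tfrac14+\tfrac34 p^{\ell}.
\end{align}
This fidelity is strictly decreasing in $\ell$. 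Ordering the paths by increasing length $\ell_1<\ell_2<\dots<\ell_z$ gives fidelities $f_i=g(\ell_i)$ with $f_1>f_2>\dots>f_z$. Note the order is reversed: shortest path means largest fidelity. SPF purifies the shortest paths first, so its output is $\mathcal{L}(f_1,f_2,\dots,f_z)$, the decreasing-fidelity sequence. SPL purifies the longest paths first, so its output is $\mathcal{L}(f_z,\dots,f_1)$, the increasing-fidelity sequence. Lemma~\ref{lemma_purification} then gives $\mathcal{L}(f_z,\dots,f_1)>\mathcal{L}(f_1,\dots,f_z)$ directly, which is the claim for this pair.

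Two technical points must be checked before the lemma can be applied.

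\emph{Domain.} The lemma and the monotonicity and ``$>\min$'' properties of $\mathcal{P}_D$ are stated on $(\frac12,1)$. I must confirm that each $f_i$ lies there, i.e.\ that $p^{\ell_i}>\frac13$. If some long path has $p^{\ell}\le\frac13$, I would either restrict the statement to the regime where all used paths stay entangled, or assume such paths are discarded by both strategies (as the algorithms implicitly do). Either way the two strategies still act on a common fidelity list.

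\emph{Ties.} Distinct paths may have equal lengths, so the strict ordering the lemma assumes can fail. I would handle this by noting that permuting equal entries gives identical sequences. I would then rerun the lemma's argument with non-strict inequalities between blocks of equal fidelities. The computation $N^\uparrow-N^\downarrow=6(f_1-f_3)(1-f_2)$ still yields strict improvement whenever the first and third entries of the relevant triple differ. I expect this tie handling, together with the degenerate case $z\le 2$ or all lengths equal (where the two strategies coincide and only equality holds), to be the main obstacle. I would state the theorem as ``$\ge$'', with strict inequality as soon as there are at least three paths with not all lengths equal.

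Finally, I would pass from single pairs to the network metric. The teleportation fidelity of an isotropic or Bell-diagonal state, $F=\frac{1+2f}{3}$ by Eq.~\eqref{eq:A3}, is strictly increasing in $f$. So $F^{\text{tel}}_{S,T}(\text{SPL})\ge F^{\text{tel}}_{S,T}(\text{SPF})$ for every pair. Summing over all $(S,T)$ gives the corresponding inequality for $\Ftel$, strict as soon as one pair satisfies the nondegenerate condition above.
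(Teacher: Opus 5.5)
Your proposal is correct and follows the paper's proof. The paper likewise sets $f_i=\frac14(1+3p^{l_{r_i}})$ and notes $f_1>f_2>\dots>f_n$, so SPL and SPF become $\mathcal{L}(f_n,\dots,f_1)$ and $\mathcal{L}(f_1,\dots,f_n)$, and it then applies Lemma~\ref{lemma_purification}.

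Your extra checks only sharpen this argument. The paper omits them: it ignores the requirement $f_i>\frac12$, leaves the monotonicity of $F=\frac{1+2f}{3}$ implicit when passing to $\Ftel$, and sidesteps ties by assuming $l_{r_1}<\dots<l_{r_n}$, even though its CGN, TLN and SLN path sets do contain ties. For the tie case, do not literally rerun the lemma's chain. A cleaner route uses the fact that the identities for $N^\uparrow-N^\downarrow$ and $D^\uparrow-D^\downarrow$ hold for arbitrary $f_1,f_2,f_3$. Together with monotonicity of $\mathcal{P}_D$, this shows that any adjacent transposition at positions $j,j+1$ with $j\ge2$ that moves the larger fidelity later strictly increases $\mathcal{L}$. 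That gives your claim (``$\ge$'', and strict once $z\ge3$ and not all lengths are equal) directly.
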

\begin{proof}
Consider a homogeneous network with all the network edges distributed with isotropic states of isotropic parameter $p$. Also consider that the length of network paths between an arbitrary pair of network nodes is $l_{r_1}<l_{r_2}<\dots<l_{r_n}$. Therefore the end-to-end fidelity of the entangled state distributed along the $i$th path via swapping is, $f_i=\frac{1}{4}(1+3p^{l_{r_i}})$ for $i\in\{1,2,\dots,n\}$. It directly follows that $f_1>f_2>\dots>f_n$. The SPL and the SPF strategies can respectively be expressed as,
\begin{align}
    &\mathcal{M}(l_{r_n},l_{r_{n-1}},\dots,l_{r_1})=\mathcal{L}(f_n,f_{n-1},\dots,f_1)\nonumber\\
    &\mathcal{M}(l_{r_1},l_{r_2},\dots,l_{r_n})=\mathcal{L}(f_1,f_2,\dots,f_n).
\end{align}
According to Lemma \ref{lemma_purification} we have, $\mathcal{L}(f_n,f_{n-1},\dots,f_1)>\mathcal{L}(f_1,f_2,\dots,f_n)$ and therefore we obtain,
\begin{align}
    \mathcal{M}(l_{r_n},l_{r_{n-1}},\dots,l_{r_1})>\mathcal{M}(l_{r_1},l_{r_2},\dots,l_{r_n}).
\end{align}
This proves that the SPL strategy leads to a better output fidelity than the SPF strategy, and hence SPL is the superior strategy for purification.
\end{proof}

\noindent \textbf{Average Teleportation Fidelity:} The average teleportation fidelity is represented by $F_{\mathrm{avg}}^{\mathrm{tel}}$, which is defined as follows: 
\begin{align} 
F_{\mathrm{avg}}^{\mathrm{tel}} &= \frac{1}{|V|} \sum_{\substack{S, T \in V \\ S \neq T}} F^{\text{tel}}_{S,T},
\end{align} 
where $|V|$ is the total number of source–target pairs $(S,T)$ in the QN. The average teleportation fidelity of the QN according to the SPL and SPF strategy is given respectively as,
\begin{align}
    F_{\mathrm{avg}}^{\mathrm{tel}}\big|_{\mathrm{SPL}}&=\frac{1}{|V|}\sum_{\substack{S, T \in V \\ S \neq T}} F^{\mathrm{tel}}_{S,T}(\mathcal{M}(l_{r_n}^{S,T},l_{r_{n-1}}^{S,T},\dots,l_{r_1}^{S,T})),\nonumber\\
    F_{\mathrm{avg}}^{\mathrm{tel}}\big|_{\mathrm{SPF}}&=\frac{1}{|V|}\sum_{\substack{S, T \in V \\ S \neq T}} F^{\mathrm{tel}}_{S,T}(\mathcal{M}(l_{r_1}^{S,T},l_{r_{2}}^{S,T},\dots,l_{r_n}^{S,T})).
\end{align}
Here $l_{r_i}^{S,T}$ denotes the $i$th path between $S$ and $T$. According to Theorem \ref{thm1}, $\mathcal{M}(l_{r_n}^{S,T},l_{r_{n-1}}^{S,T},\dots,l_{r_1}^{S,T})>\mathcal{M}(l_{r_1}^{S,T},l_{r_{2}}^{S,T},\dots,l_{r_n}^{S,T})$ for each pair $(S,T)$ in the QN. Therefore, we can conclude that,
\begin{eqnarray}
    F_{\mathrm{avg}}^{\mathrm{tel}}\big|_{\mathrm{SPL}} \geq F_{\mathrm{avg}}^{\mathrm{tel}}\big|_{\mathrm{SPF}}.
\end{eqnarray}
Equality is only reached when there are exactly two paths between each $(S,T)$ pair.\\
\indent In the remainder of this paper, we employ the SPL MPEP protocol for entanglement distribution in quantum networks.

\section{Purification with Regular topologies}
\label{sec4}
In this section, we describe the iterative purification formula for regular topologies: the ring network (RN), the complete graph network (CGN), the triangular lattice network (TLN), and the square lattice network (SLN).  For both the situation with MPEP and the case without MPEP, we show how the average teleportation fidelity changes with respect to the isotropic state parameter ($p$), where the number of purifications $i$ is varied. Additionally, we plot the relative gain with respect to $p$, which is defined as follows 
\begin{eqnarray}
\text{Relative Gain}= \frac{\Ftel|_{\text{with MPEP}}-\Ftel|_{\text{without MPEP}}}{\Ftel|_{\text{without MPEP}}}.
\label{rg}
\end{eqnarray}
The same is presented for both the scenarios of $k=1$ and $k=k_{\max}$ to facilitate a comparison of the proposed algorithms. As the SPL strategy outperforms the SPF strategy, the analytical and numerical results presented here are only for the SPL strategy.
%
\subsection{Ring Network (RN)}
As our first demonstration, we consider a ring network (RN) as a loop topology and try to see when there is a multiple path between the source and target, whether the teleportation fidelity of the network is benefited by the purification process by converting multiple paths to a single path.
\begin{figure}[h]
    \centering
    \includegraphics[scale=0.18]{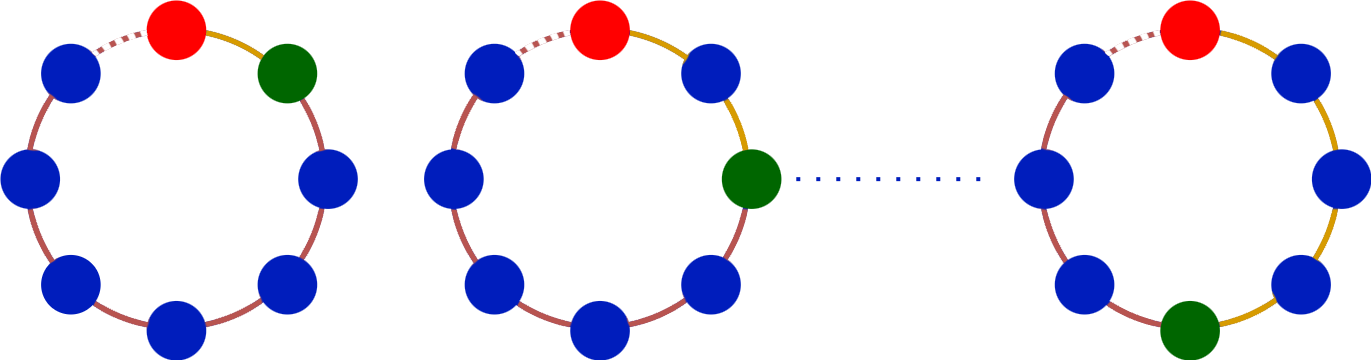}
    \caption{Purification procedure for an $N$-node ring network (RN) for both $k=1$ and $k=k_{\max}$. A fixed source node (marked in red) is considered, and the two distinct paths (marked in different colors) are determined based on the positions of the target nodes (marked in green). Based on the symmetry, the remaining configurations can be obtained analogously after the final step.}
    \label{fig:ringN}
\end{figure}

For more clarification, see the Fig. \ref{fig:ringN}, where we have shown two distinct paths always available for every source and target.
\subsubsection{One edge can be used only once, i.e. $k=1$}

In the case of a ring network with node $N$, there are generally two separate paths from the source ($S$) to the target ($T$), so the maximum number for purification is limited to one.\\
%
\indent We first consider the nearest neighbour case, i.e., the $(S,T)$ is $(j,j+1)$ as shown in Fig. \ref{fig:ringN}. In this situation, there are two distinct paths available, which are $l_1$ and $l_{N-1}$. This scenario occurs $N$ times for a ring network with $N$ nodes. Hence, the corresponding contribution is $$N \times F^{\mathrm{tel}}_{j,j+1}(l_{N-1},l_1),$$ where the argument denotes the ordering of the paths based on the pathlengths. 
Next, consider case i.e. $(S,T) = (j,j+2)$, the available distinct paths are $l_2$  and $l_{N-2}$ and this configuration also occurs $N$ times again which leads to the contribution $$N \times F^{\mathrm{tel}}_{j,j+2}(l_{N-2},l_2).$$ 
Finally, we consider $(S,T)=(j,j+\lfloor\frac{N}{2}\rfloor)$ for which the distinct paths are $l_{\lfloor\frac{N}{2}\rfloor}$ and $l_{N-\lfloor\frac{N}{2}\rfloor}$. The number of such configurations equals $N$ for odd $N$, and for even $N$ it reduces to $N-2$. Hence, collecting all the results and taking the average, we have
\begin{eqnarray}
    \Ftel = 
    \begin{cases}
        \langle N\times F^{\mathrm{tel}}_{j,j+1}(l_{N-1},l_1), N\times F^{\mathrm{tel}}_{j,j+1}(l_{N-2},l_2),\\\hdots,N\times F^{\mathrm{tel}}_{j,j+\lfloor\frac{N}{2}\rfloor}(l_{N-\lfloor\frac{N}{2}\rfloor}, l_{\lfloor\frac{N}{2}\rfloor}) \rangle,\\ \text{when $N$ is odd},\\
        \langle N\times F^{\mathrm{tel}}_{j,j+1}(l_{N-1},l_1), N\times F^{\mathrm{tel}}_{j,j+1}(l_{N-2},l_2),\\\hdots,(N-2)\times F^{\mathrm{tel}}_{j,j+\lfloor\frac{N}{2}\rfloor}(l_{N-\lfloor\frac{N}{2}\rfloor}, l_{\lfloor\frac{N}{2}\rfloor}) \rangle,\\ \text{when $N$ is even}.        
    \end{cases}
\end{eqnarray}
%
        

\subsubsection{One edge can be used maximum number of times i.e. $k=k_{\max}$}
%
\indent We find that for $N=4$, at $p\approx 0.8966$, the value of $\Ftel$ with MPEP strategy exceeds that without MPEP. For $N>4$, such an improvement is not observed for any value of $p$. The $\Ftel$ and relative gain obtained via the MPEP strategy for different values of $p$ are shown in Fig. \ref{Fvsp_ring}(a) and \ref{Fvsp_ring}(b), respectively, for a RN with $N=100$. In the case of RN, only two paths are available between any node pair, thereby restricting the achievable enhancement and relative gain. Moreover, the length of the two available paths differ significantly with increasing number of nodes in the network, which in return decreases the purified fidelity. As a result, no improvement can be seen for $N>4$. This also indicates that the MPEP strategy cannot improve the $\Ftel$ for a ring network.  
%

    
\begin{figure}[h]
    \centering
    \includegraphics[scale=0.27]
        {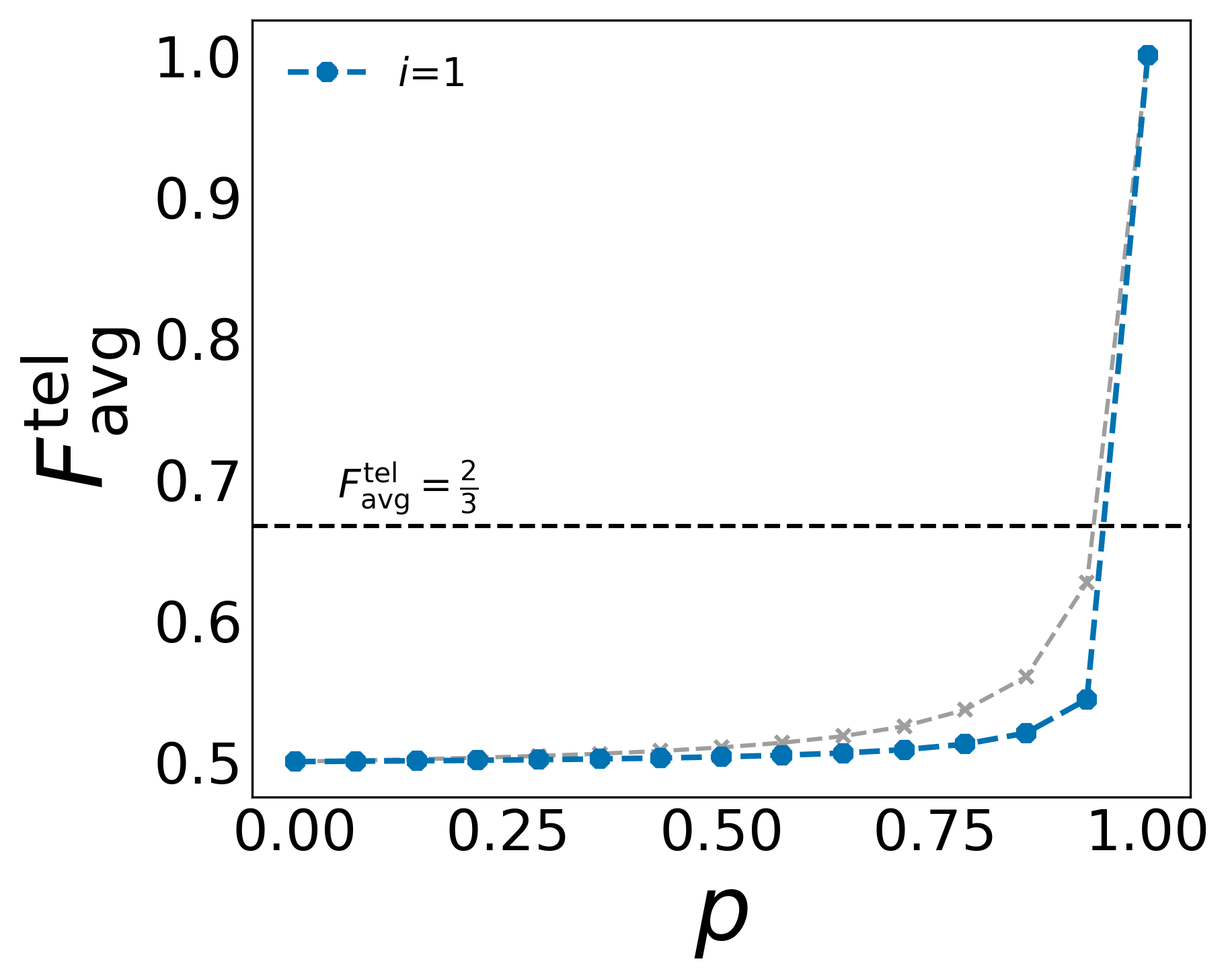}
    \includegraphics[scale=0.27]
        {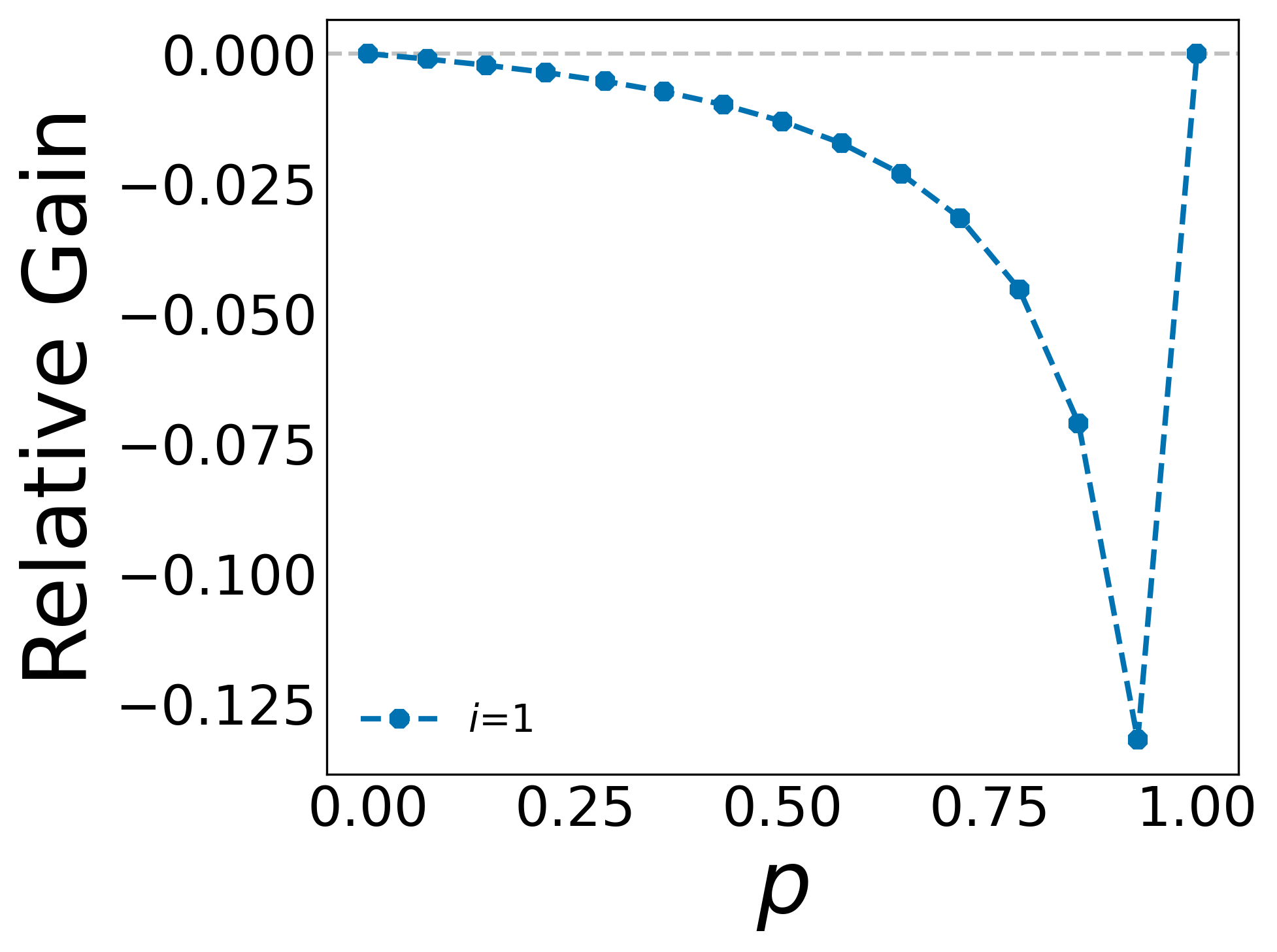}
    \\[1ex]
    \hspace*{0.75cm}(a) \hspace{3.9cm} (b)
    \caption{Plots of (a) $F^{\mathrm{tel}}_{\mathrm{avg}}$ as a function of $p$ and (b) relative gain as a function of $p$ for a ring network with $N=100$ and $k=1=k_{\max}$. The total number of paths is taken to be $z=z_{\max}=2$. The horizontal black line suggests $\Ftel=\frac{2}{3}$. The grey line denotes the variation of $\Ftel$ with respect to $p$ for the without MPEP scenario.}
    \label{Fvsp_ring}
\end{figure}
%
\subsection{Complete Graph Network (CGN)}
Let $k$ denote the number of times an edge can be utilized. As in the previous case, we consider two cases based on how many paths one edge can be part of.

\begin{figure}[h]
    \centering
    \includegraphics[scale=0.135]{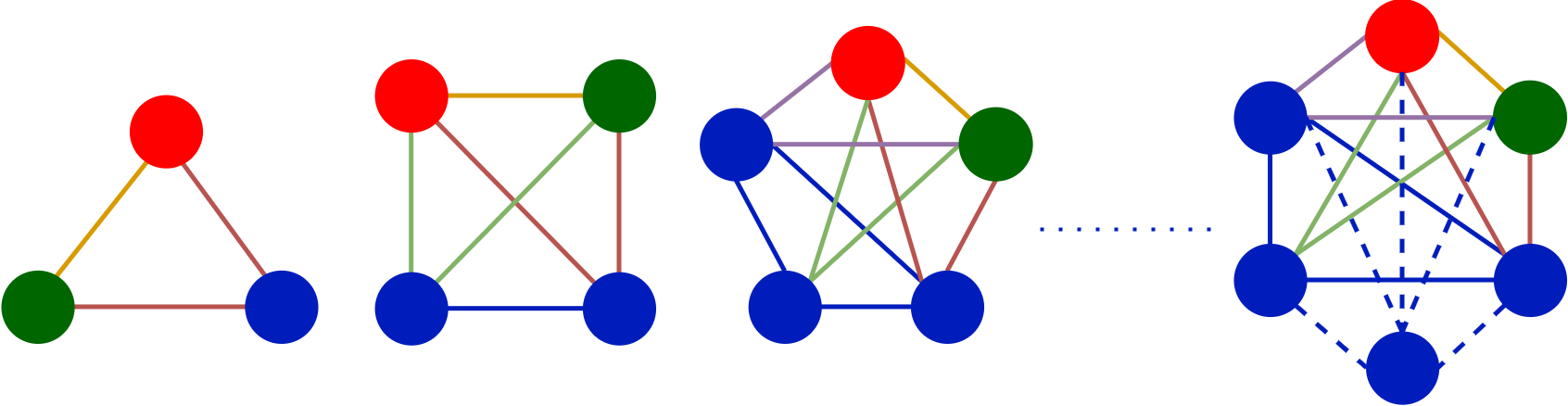}
    \caption{Purification procedure for an $N$-node complete graph network (CGN) for $k=1$. A fixed source (marked in red) and target (marked in green) are considered, and the paths are determined based on the total number of intermediate nodes. As the total number of nodes increases, the number of intermediate nodes—and hence the number of available paths—also increases; the additional paths at each step are indicated by different colors. Due to symmetry, the remaining $(S,T)$ pairs can be treated analogously after the final step.}
    \label{fig:CG_puri}
\end{figure}

\subsubsection{One edge can be used only once, i.e. $k=1$}
Let us consider a fixed source–target pair $(S,T)=  (j,j+1)$. For the base case $N=3$, there are two distinct paths between $S$ and $T$, which are $l_1$ and $l_2$ as depicted in the first subfigure of Fig. \ref{fig:CG_puri}. As the number of nodes increases, each additional node introduces one extra path of length $l_2$ as highlighted by different colors in the subsequent subfigures of Fig. \ref{fig:CG_puri}. Consequently, for a network with $N$ nodes, the total number of $l_2$ paths turns out to be $N-2$. Hence, the set of paths for the iterative process is $$\mathcal{M} = \{l_1, \underbrace{l_2, \hdots, l_2}_{N-2 \text{ times}}\}.$$
Now, let $z$ denote the maximum number of paths used for the purification procedure. Following the SPL strategy, the paths are arranged in descending order. Hence, the contribution corresponding to a single $(S,T)$ pair is $$F^{\mathrm{tel}}_{j,j+1}(\mathcal{M}^{(z)}_{\mathrm{SPL}}) = F^{\mathrm{tel}}_{j,j+1}(\underbrace{l_2, \hdots, l_2}_{z-1 \text{ times}},l_1).$$
From the symmetry of the complete graph topology, the average over all source-target pairs is identical to the contribution obtained for any single pair $(S,T)$. Therefore 
\begin{eqnarray}
    \Ftel = 
        \langle F^{\mathrm{tel}}_{j,j+1}(\mathcal{M}^{(z)}_{\mathrm{SPL}}) \rangle = F^{\mathrm{tel}}_{j,j+1}(\mathcal{M}^{(z)}_{\mathrm{SPL}}).
\end{eqnarray}
%

\subsubsection{One edge can be used maximum number of times i.e. $k=k_{\max}$}
If we consider a specified source-target pair $(S,T) = (j,j+1)$. The shortest path connecting $S$ and $T$ has length $l_1$, while the longest path has length $l_{N-1}$. Now, for $k=k_{\max}$ and for a path of length $l_m$, fixing the endpoints at $j$ and $j+1$, the total number of paths is given by $${}^{N-2}P_{m-1} = \frac{(N-2)!}{(N-m-1)!}.$$ Hence, the set for paths is $$\mathcal{M} = \{l_1, \underbrace{l_2, \hdots, l_2}_{{}^{N-2}P_{1} \text{ times}}, \hdots, \underbrace{l_{N-1}, \hdots, l_{N-1}}_{{}^{N-2}P_{N-2} \text{ times}}\}.$$
Again, if $z$ is the maximum number of paths used for the purification procedure, we can rearrange the paths based on the SPL strategy. Therefore, the contribution for fixed $(S,T)$ is
$$F^{\mathrm{tel}}_{j,j+1}(\mathcal{M}^{(z)}_{\mathrm{SPL}}).$$
The symmetry of the whole network topology indicates that the average across all source-target pairs is equivalent to the contribution derived from any individual pair $(S,T)$. Therefore,
\begin{eqnarray}
    \Ftel &=& 
        \langle F^{\mathrm{tel}}_{j,j+1}(\mathcal{M}^{(z)}_{\mathrm{SPL}}) \rangle = F^{\mathrm{tel}}_{j,j+1}(\mathcal{M}^{(z)}_{\mathrm{SPL}}) .
\end{eqnarray}
%
%
\subsubsection{Results}
In Fig. \ref{Fvsp_CG_k1}(a) we present the scaling of $\Ftel$ with $p$ for both MPEP with varying path lengths and utilizing a single path without using MPEP. For $k = 1$, we can see the variation of $\Ftel$ as we increase the number of purification steps $i$. Here, we find that for $N=100$ and for $z=20$, the value of $\Ftel$ with MPEP strategy exceeds that without MPEP at $p\approx 0.8186$ when $i=5$, and $p \approx 0.8135$ when $i=10$. For the maximum number of purification steps $i=i_{\max} = 19$, the enhancement can be achieved at $p \approx 0.6440$. These results indicate that the MPEP strategy enhances $\Ftel$, with improvements more pronounced as the number of purification steps approaches the maximum. We have also plotted the relative gain of $\Ftel$ with respect to $p$ in Fig. \ref{Fvsp_CG_k1}(b), which also shows a significant positive gain at higher values of $i$.\\
\begin{figure}[h]
    \centering
    \includegraphics[scale=0.27]
        {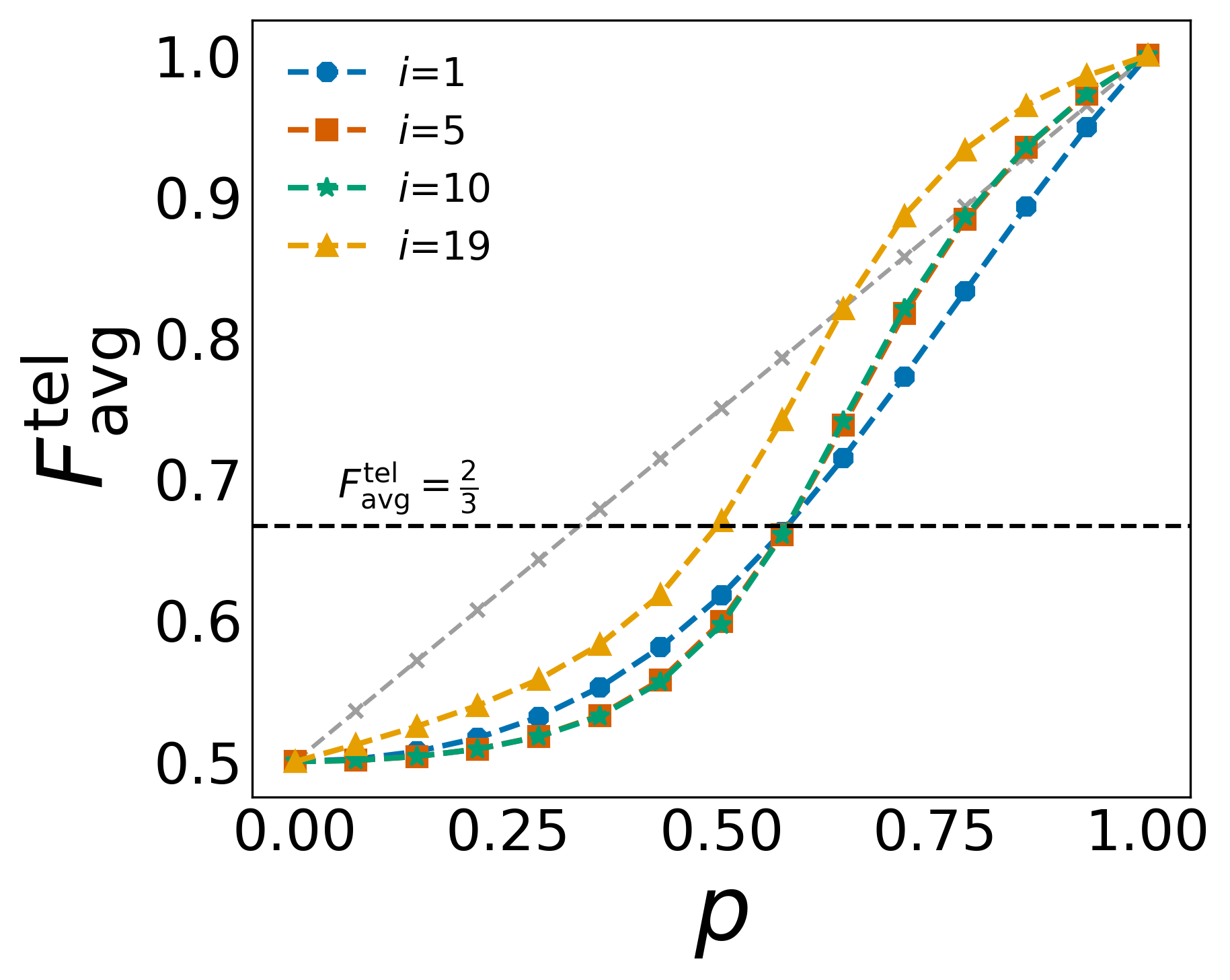}
    \includegraphics[scale=0.27]
        {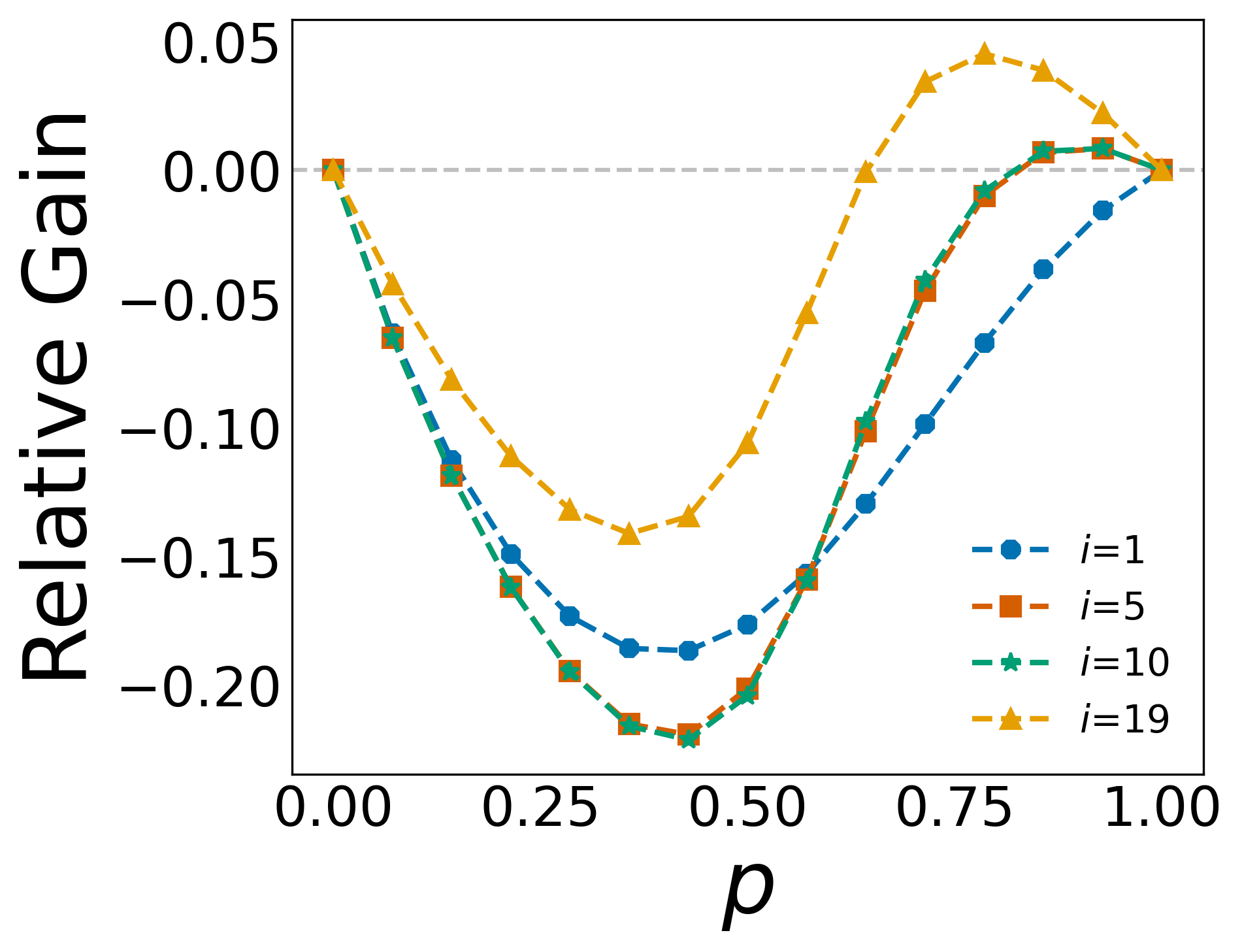}
    \\[1ex]
    \hspace*{0.75cm}(a) \hspace{3.9cm} (b)
    \caption{Plots of (a) $F^{\mathrm{tel}}_{\mathrm{avg}}$ as a function of $p$ and (b) relative gain as a function of $p$ for a complete graph network with $N=100$ and $k=1$. The total number of paths is taken to be $z=20$. The horizontal black line suggests $\Ftel=\frac{2}{3}$. The grey line denotes the variation of $\Ftel$ with respect to $p$ for the without MPEP scenario.}
    \label{Fvsp_CG_k1}
\end{figure}
On the other hand, for $k=k_{\max}$ and $z=20$, the set of paths is the same as the $k=1$ case. Hence, the results remain the same as previous, as shown in Fig. \ref{Fvsp_CG_kmax}(a). The same result is reflected in the case of relative gain as depicted in Fig. \ref{Fvsp_CG_kmax}(b). In this case, the set of paths can contain more elements, and therefore, it is possible to do a larger number of purifications. However, the value of $\Ftel$ will saturate after a certain number of purifications.
\begin{figure}[h]
    \centering
    \includegraphics[scale=0.27]
        {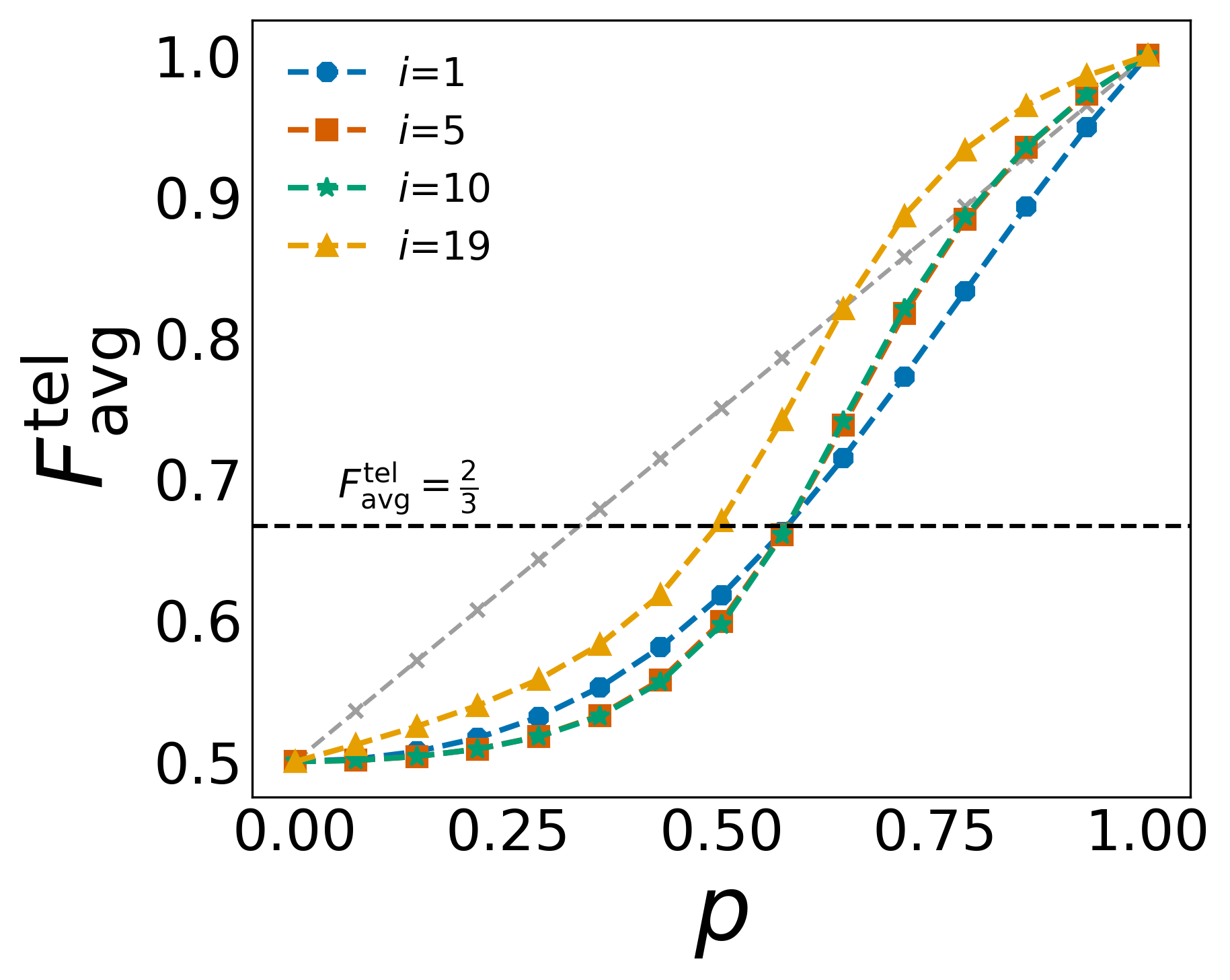}
    \includegraphics[scale=0.27]
        {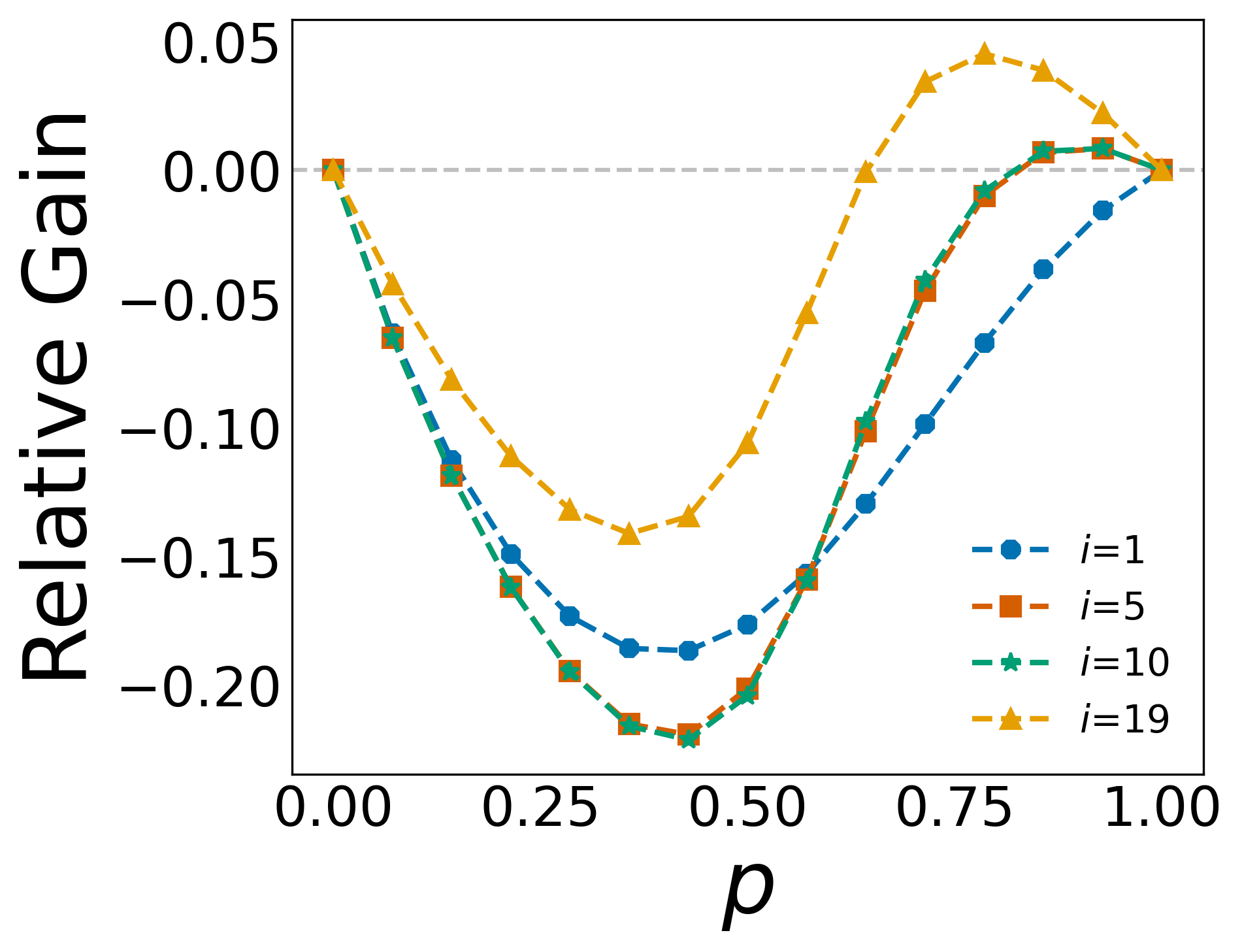}
    \\[1ex]
    \hspace*{0.75cm}(a) \hspace{3.9cm} (b)
    \caption{Plots of (a) $F^{\mathrm{tel}}_{\mathrm{avg}}$ as a function of $p$ and (b) relative gain as a function of $p$ for a complete graph network with $N=100$ and $k=k_{\max}$. The total number of paths is taken to be $z=20$. The horizontal black line suggests $\Ftel=\frac{2}{3}$. The grey line denotes the variation of $\Ftel$ with respect to $p$ for the without MPEP scenario.}
    \label{Fvsp_CG_kmax}
\end{figure}\\
\indent For both $k=1$ and $k=k_{\max}$, the MPEP protocol fails to achieve the quantum advantage threshold, i.e. $\Ftel=\frac{2}{3}$, at a value of $p_c$ lower than that required for without MPEP scenario. Nevertheless, increasing the number of paths $z$ while performing the proposed algorithm does not necessarily lead to a significantly higher value of $\Ftel$. As $z$ increases, the corresponding increment of $\Ftel$ becomes gradually smaller. As a result, $\Ftel$ reaches its saturation value after certain $z$ with $i_{\max}=z-1$. For $N=100$, we found that the increment becomes of the order of $10^{-4}$ after $z=10$ and $i=9=i_{\max}$ for both $k=1$ and $k=k_{\max}$ which is shown in Fig. \ref{Fvsp_CG_saturation}. Therefore, the reuse of the network edges as part of different network paths does not make a difference in this particular network topology, and therefore, one can simply use MPEP with $k=1$.
\begin{figure}[h]
    \centering
    \includegraphics[scale=0.27]{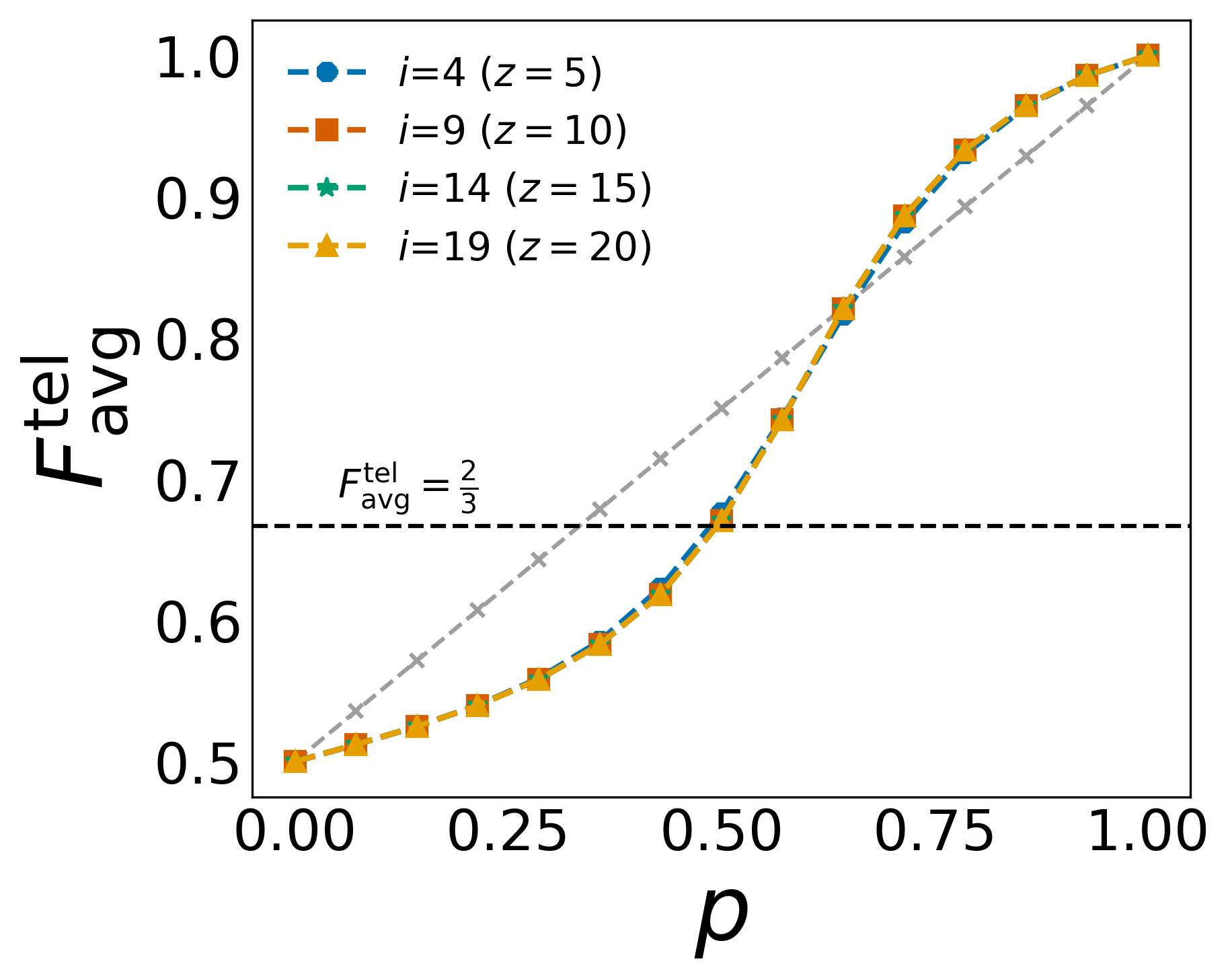}
    \includegraphics[scale=0.27]
        {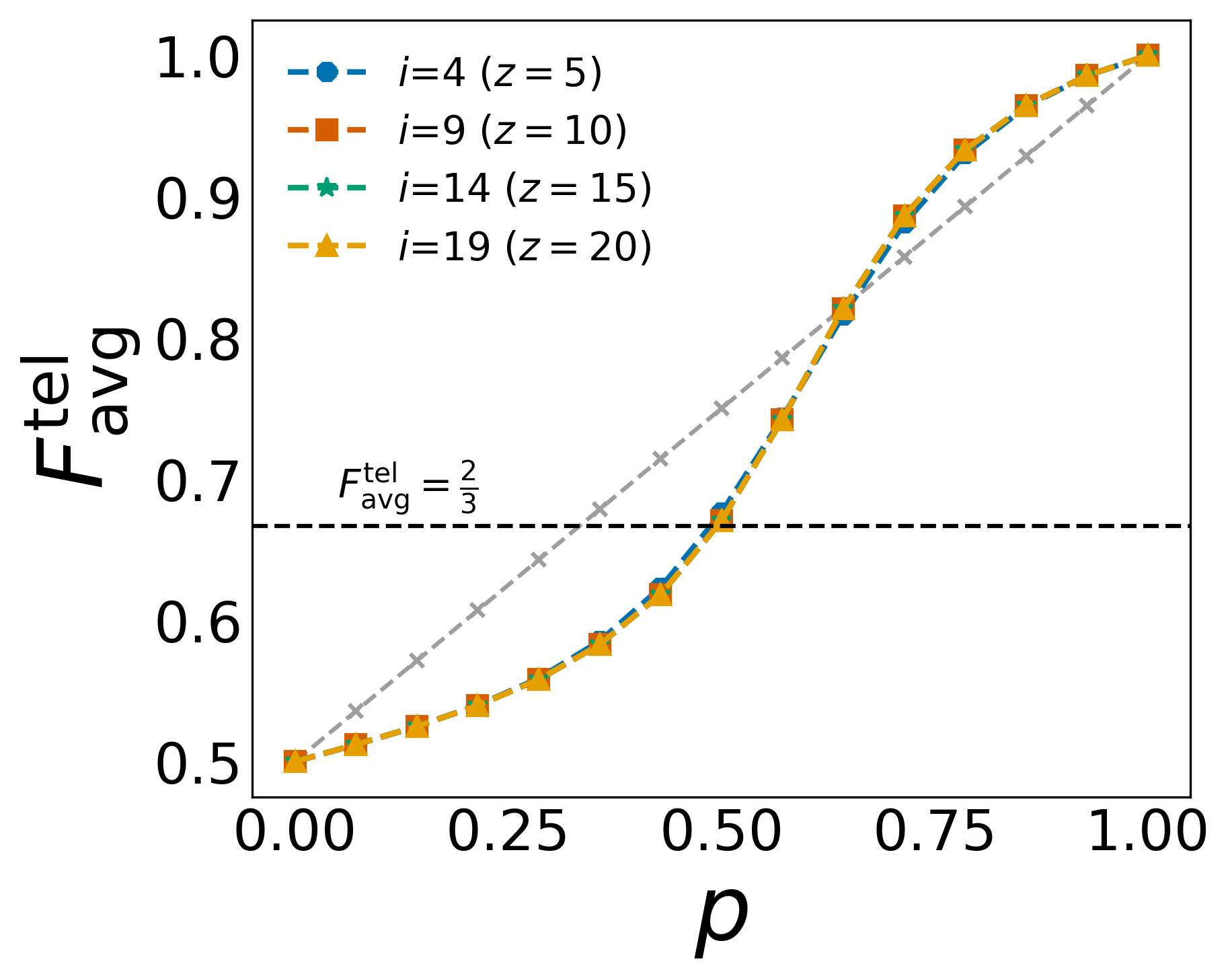}
    \\[1ex]
    \hspace*{0.75cm}(a) \hspace{3.9cm} (b)
    \caption{Plots of $F^{\mathrm{tel}}_{\mathrm{avg}}$ as a function of $p$ with varying $z$ for a complete graph network with $N=100$ where (a) $k=1$ and (b) $k=k_{\max}$.}
    \label{Fvsp_CG_saturation}
\end{figure}\\
%

\subsection{Triangular Lattice Network (TLN)}
The structure of the triangular lattice network (TLN) is based on a triangular grid in which each node has degree 6, except at the boundaries. In the present work, we consider an infinite lattice, thereby preserving full translational symmetry and neglecting boundary effects. For sufficiently large finite TLN, these boundary effects become negligible.
\begin{figure}[h]
    \centering
    \includegraphics[scale=0.15]{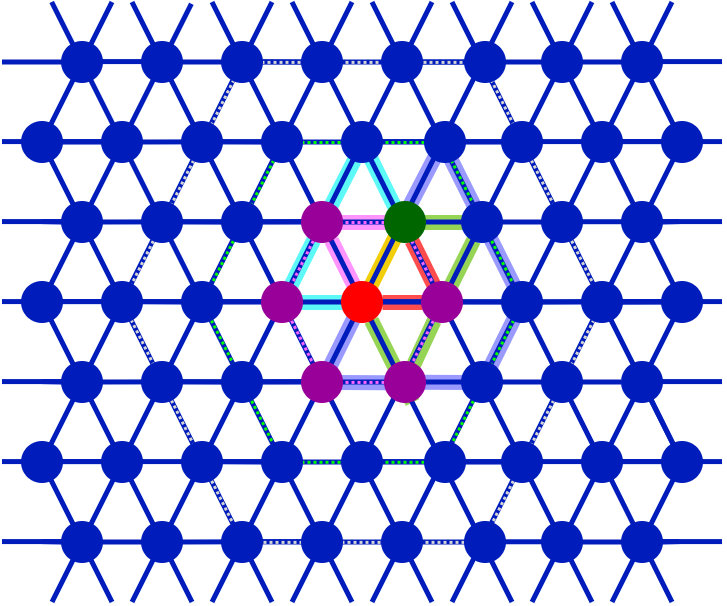}
    \hspace{0.5cm}
    \includegraphics[scale=0.15]{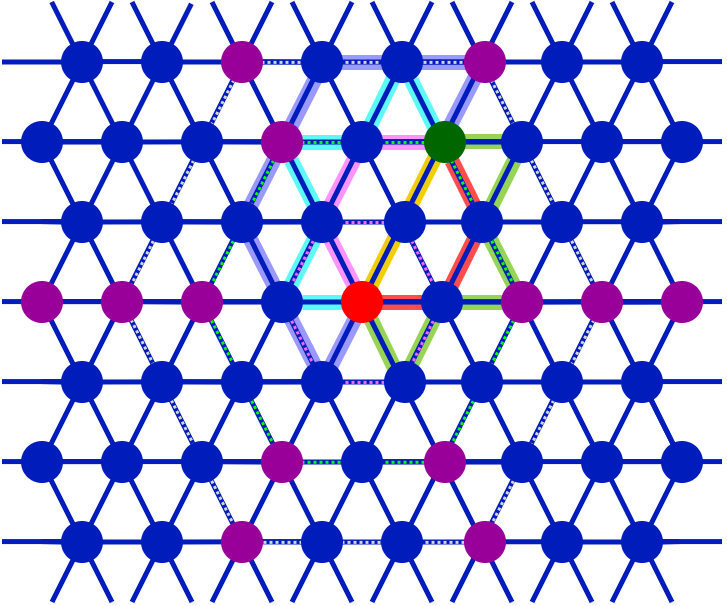}
    \\[1ex]
    (a) Nearest Neighbor \hspace{2cm} (b) Linear
    \\[1ex]
    \includegraphics[scale=0.15]{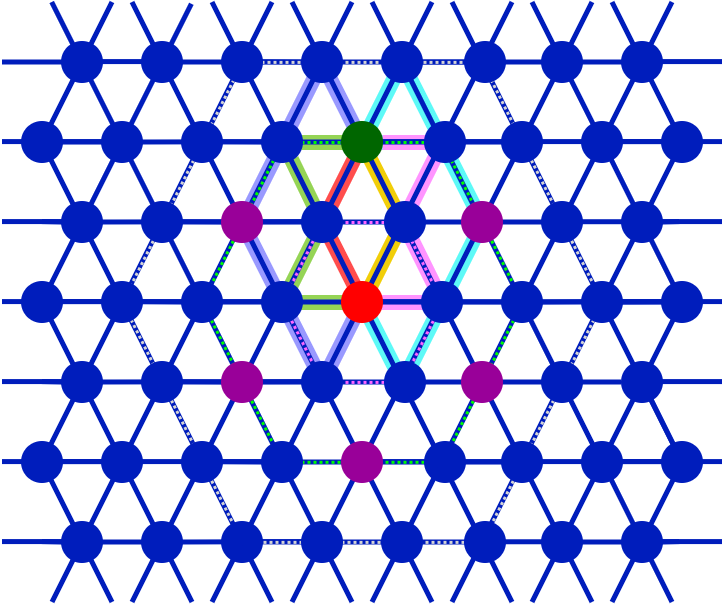}
    \hspace{0.5cm}
    \includegraphics[scale=0.15]{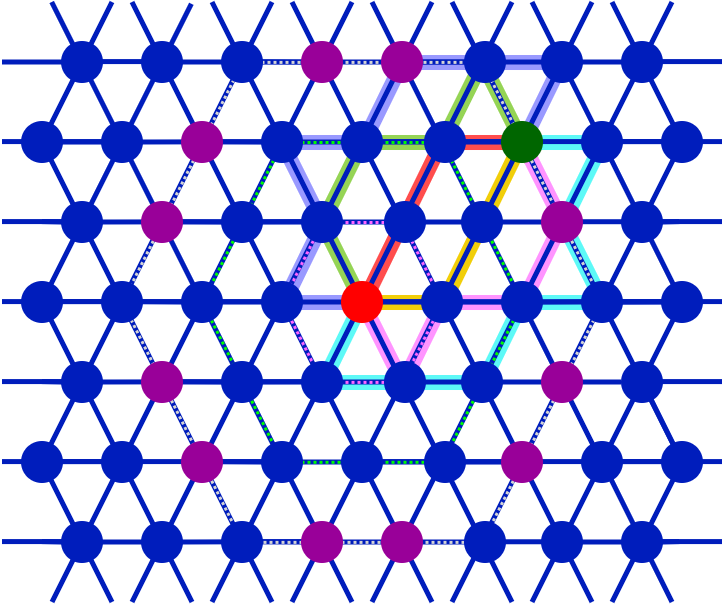}
    \\[1ex]
    (c) Diagonal \hspace{1.5cm} (d) Asymmetric Diagonal
    
    \caption{Purification procedure for an infinite triangular lattice network (TLN) for $k=1$. A fixed source node (marked in red) is considered, and the paths are determined by the position of the target node (marked in green). Four distinct classes of target positions arise, each associated with different path lengths, which are depicted in different colors for each case.}
    \label{fig:TLN}
\end{figure}
\subsubsection{One edge can be used only once, i.e. $k=1$}
As we considered the $k=1$ case, the total number of paths and the corresponding pathlengths can be determined in terms of the shortest lattice path (SLP) length $l_r$.\\
In the case of the TLN, there are four possible positions of a target $T$ with respect to a source node $S$: nearest neighbour, linear, diagonal, and asymmetric diagonal target nodes \cite{mondal2024}. Additionally, since one node has a degree of 6, the maximum number of distinct paths ($z_{\max}$) is 6. For a fixed source $S$, there are six distinct paths to any target node $T$. If $S$ and $T$ are nearest neighbours then the path lengths between them are $\mathcal{M}_{\text{NN}} = \{l_1, l_2, l_2, l_4, l_4, l_7\}$. This case arises only when $r=1$, and hence the SLP is $l_1$. \\
\indent When the target node $T$ is linear, there exist six distinct paths. However, when $r = 2$, the paths are $\mathcal{M}_{\text{L}} = \{l_2, l_3, l_3, l_6, l_6, l_{10}\}$, and for $r > 2$, the paths are $\mathcal{M}_{\text{L}} = \{l_r, l_{r+1}, l_{r+1}, l_{r+4}, l_{r+4}, l_{r+7}\}$. These paths are depicted in Fig \ref{fig:TLN}(a).  On the other hand, when $T$ is diagonal, there are also 6 possible distinct paths, i.e. $\mathcal{M}_{\text{D}} = \{l_r, l_{r}, l_{r+2}, l_{r+2}, l_{r+4}, l_{r+4}\}$ with $r$ even, as depicted in Fig. \ref{fig:TLN}(b). Finally, when $T$ is an asymmetric diagonal node, six possible paths are  $\mathcal{M}_{\text{ASD}} = \{l_r, l_{r}, l_{r+2}, l_{r+2}, l_{r+5}, l_{r+5}\}$ with $r>2$, and it is shown in Fig. \ref{fig:TLN}(c). Here, the diagonal and asymmetric diagonal case only arises when $r>1$.\\
\indent Finally, the total number of nodes is $6r$ for a given SLP $l_r$. For $r=1$, only six nearest neighbour nodes are present, and hence the fraction is $\frac{6r}{6r}=1$. In the case of $r=2$, there are six linear as well as diagonal nodes. Hence, the contributions are $\frac{6}{6r} = \frac{1}{r} = \frac{1}{2}$ for both the cases. Finally, two cases arise for $r>2$. When $r$ is even, there are six linear nodes, six diagonal nodes, and hence the total number of asymmetric diagonal nodes is $6r-12$. The contributions are $\frac{1}{r}, \frac{1}{r}$, and $\frac{r-2}{r}$ respectively. However, for odd $r$, six linear nodes are contributing $\frac{1}{r}$ as the fraction and rest $6r-6$ nodes are asymmetric diagonal providing $\frac{r-1}{r}$ as the fraction. Incorporating all the contributions from $T$ nodes and fixed $S$ node (denoted as $(S,*)$),
\begin{eqnarray*}
    &&F^{\mathrm{tel}}_{r}|_{(S,*)} \nonumber \\
    &&= 
    \begin{cases}
    F^{\mathrm{tel}}(\mathcal{M}_{\text{NN}}|^{(z)}_{\text{SPL}}), \qquad \qquad \qquad \qquad \qquad \text{for $r=1$,}\\
    \frac{1}{2}F^{\mathrm{tel}}(\mathcal{M}_{\text{L}}|^{(z)}_{\text{SPL}}) +\frac{1}{2}F^{\mathrm{tel}}(\mathcal{M}_{\text{D}}|^{(z)}_{\text{SPL}}), \qquad \text{for $r=2$,}\\
    \frac{1}{r} F^{\mathrm{tel}}(\mathcal{M}_{\text{L}}|^{(z)}_{\text{SPL}}) +\frac{r-1}{r}F^{\mathrm{tel}}(\mathcal{M}_{\text{ASD}}|^{(z)}_{\text{SPL}}), \text{for $r$ = odd,}\\
    \frac{1}{r} F^{\mathrm{tel}}(\mathcal{M}_{\text{L}}|^{(z)}_{\text{SPL}})
    +\frac{1}{r} F^{\mathrm{tel}}(\mathcal{M}_{\text{D}}|^{(z)}_{\text{SPL}}) \\    
    +\frac{r-2}{r}F^{\mathrm{tel}}(\mathcal{M}_{\text{ASD}}|^{(z)}_{\text{SPL}}), \qquad \qquad \qquad \quad \text{for $r$ = even.}
    \end{cases}
\end{eqnarray*}
%
To compute the average, we have to consider all $(S,T)$ pairs. Hence,
\begin{eqnarray}
    \Ftel = \sum_{r=1}^{r_{\max}} p_r F_r^{\mathrm{tel}}|_{(S,*)},
\end{eqnarray}
where $p_r$ denotes the fraction of node pairs separated by distance $r$ in the TLN, and the maximum value of $r$ is $r_{\max} = 2(n-1)$ if we consider the $N = n \times n$ triangular grid. The exact value of $p_r$ is difficult to calculate analytically.

\subsubsection{One edge can be used maximum number of times i.e. $k=k_{\max}$}
In this scenario, the longest path length for purification for a fixed source can extend up to $2(n-1)$, i.e. $\mathcal{O}(n)$, making it extremely difficult to determine the number of paths analytically. Because many alternative intermediary node configurations contribute to paths of different lengths, the combinatorial complexity rises quickly with the network size. As a result, it is difficult to get a closed-form equation for the average of teleportation fidelity. In order to assess the relevant quantities and derive the related outcomes, we turn to numerical results.

\subsubsection{Results}
We plot $\Ftel$ as a function of visibility $p$ for entanglement distribution using both MPEP and without MPEP scenarios on finite triangular lattices with equal numbers of rows and columns, i.e., $m=n$. We first focus on the case $k=1$. For $(m,n)=(10,10)$ and a maximum number of purifications $i=5$, the MPEP strategy outperforms the scenario without MPEP throughout the entire range of $p$. This result demonstrates that the MPEP strategy provides a significant improvement over without MPEP scenario, as shown in Fig. \ref{Fvsp_TLN_k1}(a). 
The corresponding relative gains are shown in Fig. \ref{Fvsp_TLN_k1}(b). In addition to that, the proposed algorithm can attain the quantum advantage at $p_c \approx 0.8793, 0.8589, 0.8430, 0.8299,$ and $0.8186$ for $i=1, 2, 3, 4$ and $5$ respectively. Whereas without MPEP scenario suggests that the quantum advantage can be achieved at $p_c \approx 0.9231$. Therefore, the MPEP protocol needs to be implemented to achieve the quantum advantage at a relatively low value of $p_c$ in the case of a SLN.
\begin{figure}[h]
    \centering
    \includegraphics[scale=0.27]
        {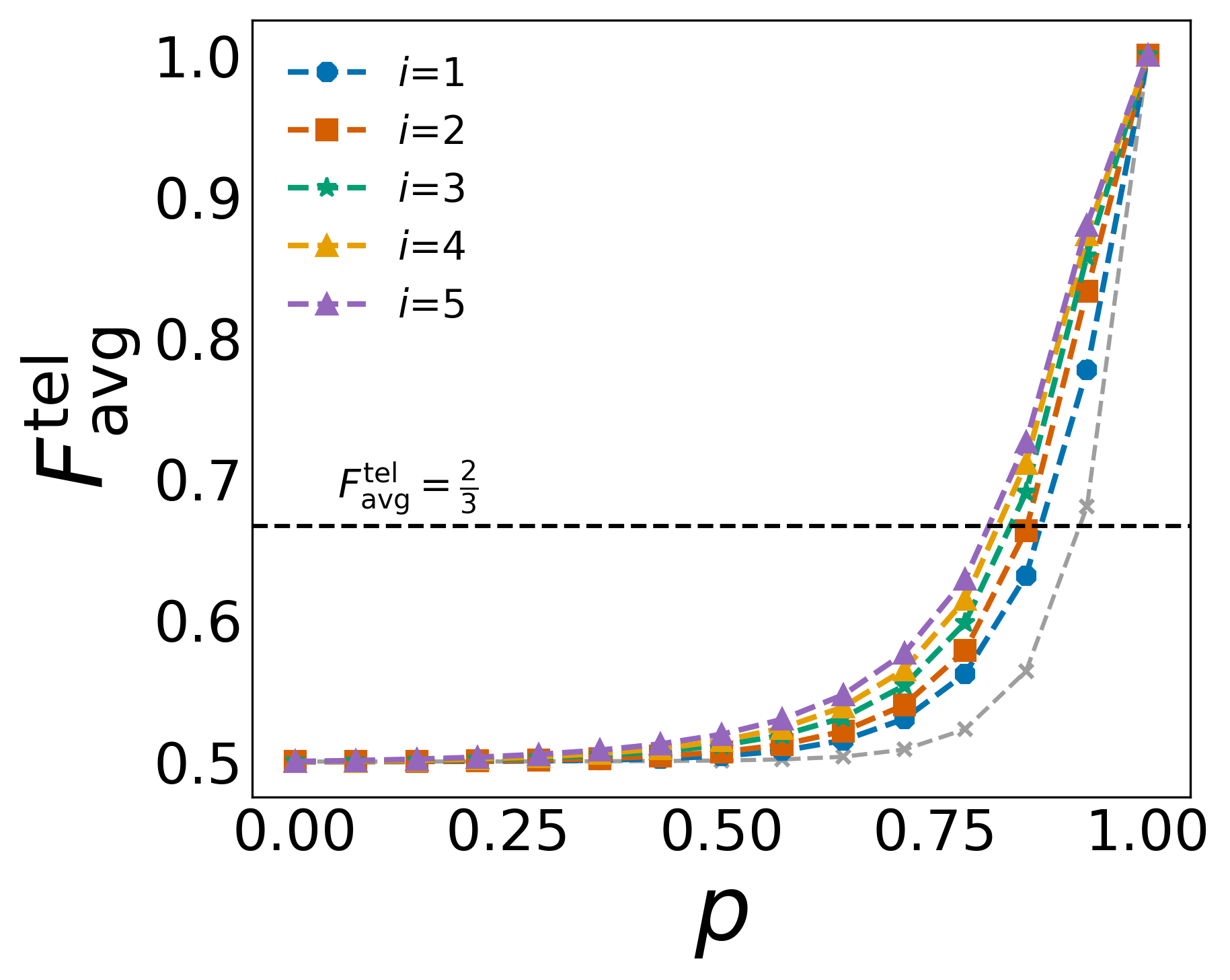}
    \includegraphics[scale=0.27]
        {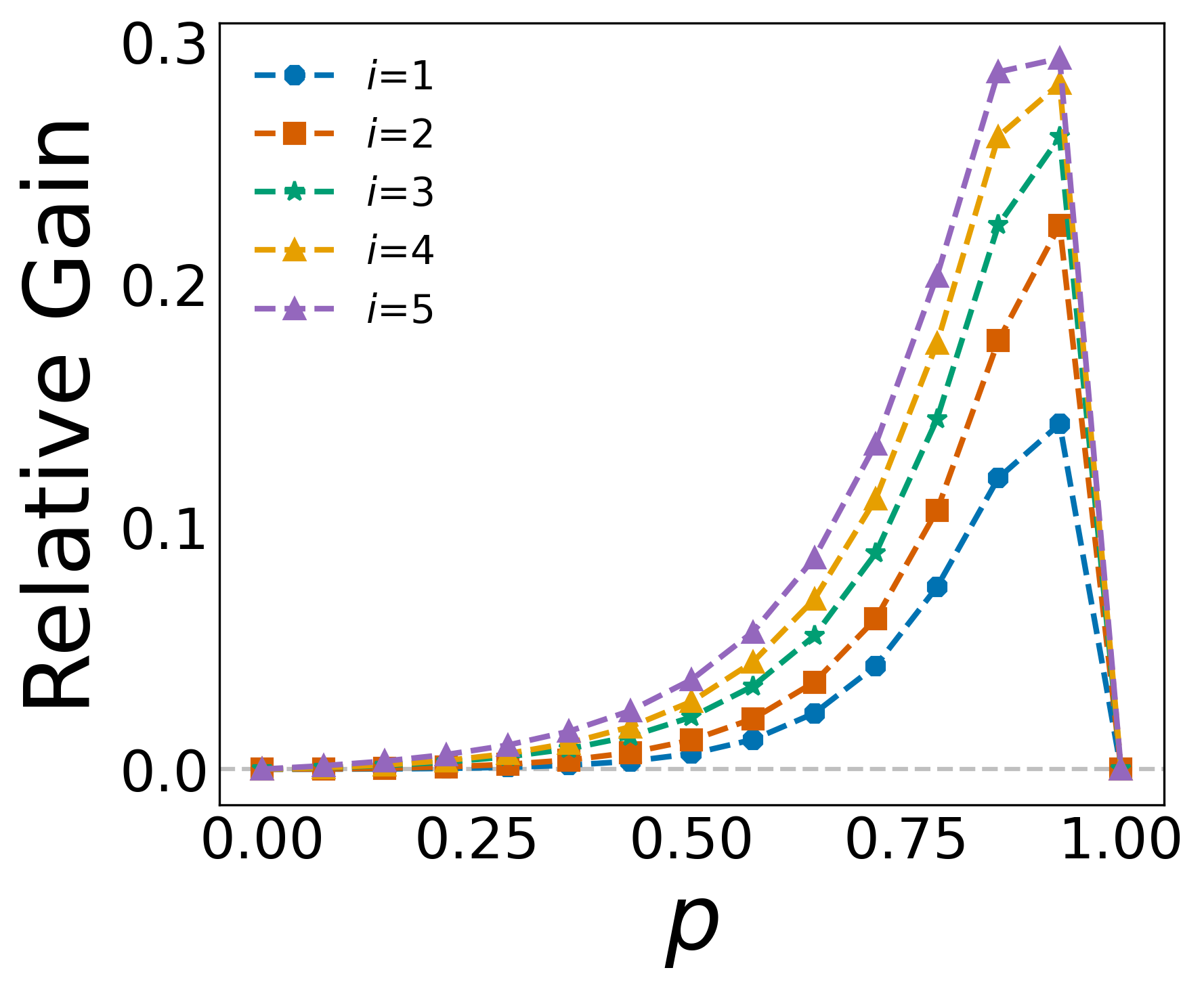}
    \\[1ex]
    \hspace*{0.75cm}(a) \hspace{3.9cm} (b)

    \caption{Plots of (a) $F^{\mathrm{tel}}_{\mathrm{avg}}$ as a function of $p$ and (b) relative gain as a function of $p$ for a triangular lattice network with $(m,n)=(10,10)$ and $k=1$. The total number of paths is taken to be $z=z_{\max}=6$. The horizontal black line suggests $\Ftel=\frac{2}{3}$. The grey line denotes the variation of $\Ftel$ with respect to $p$ for the without MPEP scenario.}
    \label{Fvsp_TLN_k1}
\end{figure}\\
\indent For $k=k_{\max}$ scenario with the same configuration, the enhancement can be achieved at $p \approx 0.1808, 0.2019, 0.1866$ for $i=1, 5$, and 9, respectively. For $i=19$, the proposed strategy shows the improvement throughout the entire range of $p$. These results are depicted in Fig. \ref{Fvsp_TLN_kmax}(a) and the corresponding relative gains are plotted in Fig. \ref{Fvsp_TLN_kmax}(b). Furthermore, the suggested approach may achieve the quantum advantage at $p_c \approx 0.8409, 0.8351, 0.8307,$ and $0.7997$ for $i=1, 5, 10,$ and $19$, respectively. These results indicate that the MPEP protocol provides a significant advantage.
\begin{figure}[h]
    \centering
    \includegraphics[scale=0.27]
        {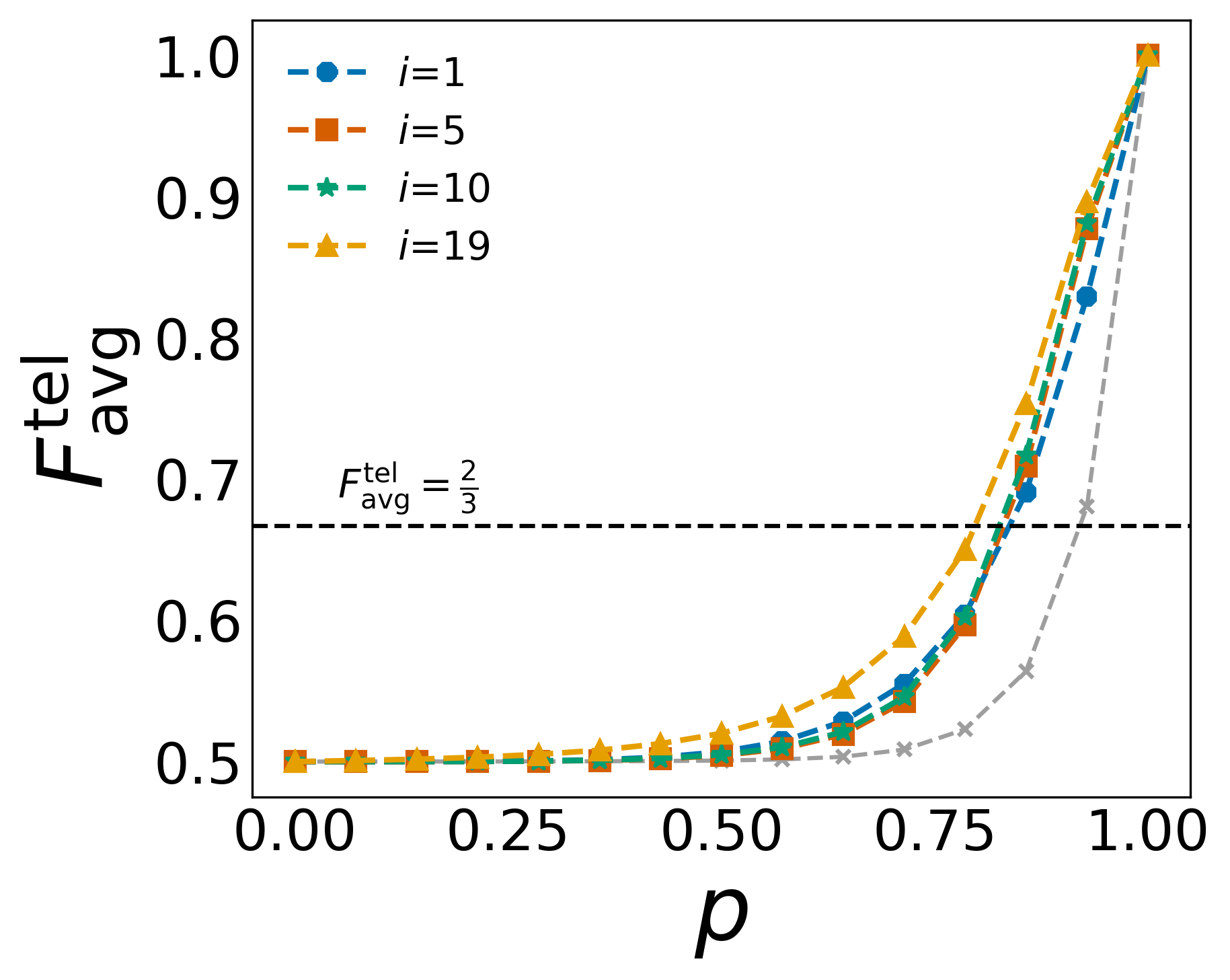}
    \includegraphics[scale=0.27]
        {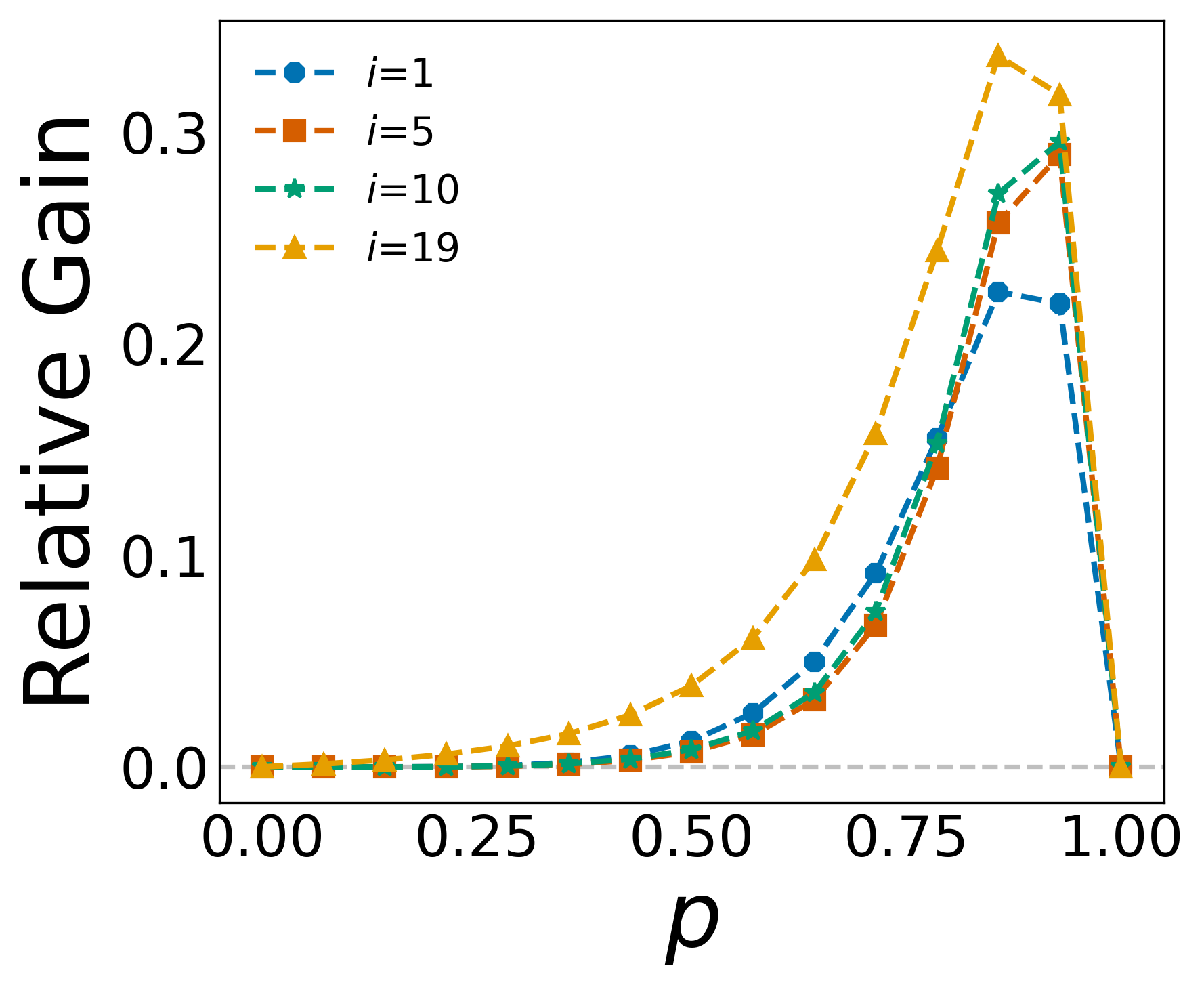}
    \\[1ex]
    \hspace*{0.75cm}(a) \hspace{3.9cm} (b)

    \caption{Plots of (a) $F^{\mathrm{tel}}_{\mathrm{avg}}$ as a function of $p$ and (b) relative gain as a function of $p$ for a triangular lattice network with $(m,n)=(10,10)$ and $k=k_{\mathrm{max}}$. The total number of paths is taken to be $z=20$. The horizontal black line suggests $\Ftel=\frac{2}{3}$. The grey line denotes the variation of $\Ftel$ with respect to $p$ for the without MPEP scenario.}
    \label{Fvsp_TLN_kmax}
\end{figure}

As $z$ increases, the $\Ftel$ gradually reaches its saturation value, and after a certain number of $z$, where $i_{\max}=z-1$. For $(m,n)=(10,10)$ and $k=1$ case, the maximum value of $z$ is 6. Hence, $\Ftel$ already reaches its maximum value in this case. However, we observed that the change in $\Ftel$ is around $10^{-4}$ after $z=6$ and $i=5=i_{\max}$ for the same TLN but for the case $k=k_{\max}$. Fig. \ref{Fvsp_TLN_saturation} shows the saturation of $\Ftel$ as $z$ increases. As a result, we can conclude that using one edge multiple times has no noticeable advantage over using each edge only once.
\begin{figure}[h]
    \centering
    \includegraphics[scale=0.27]{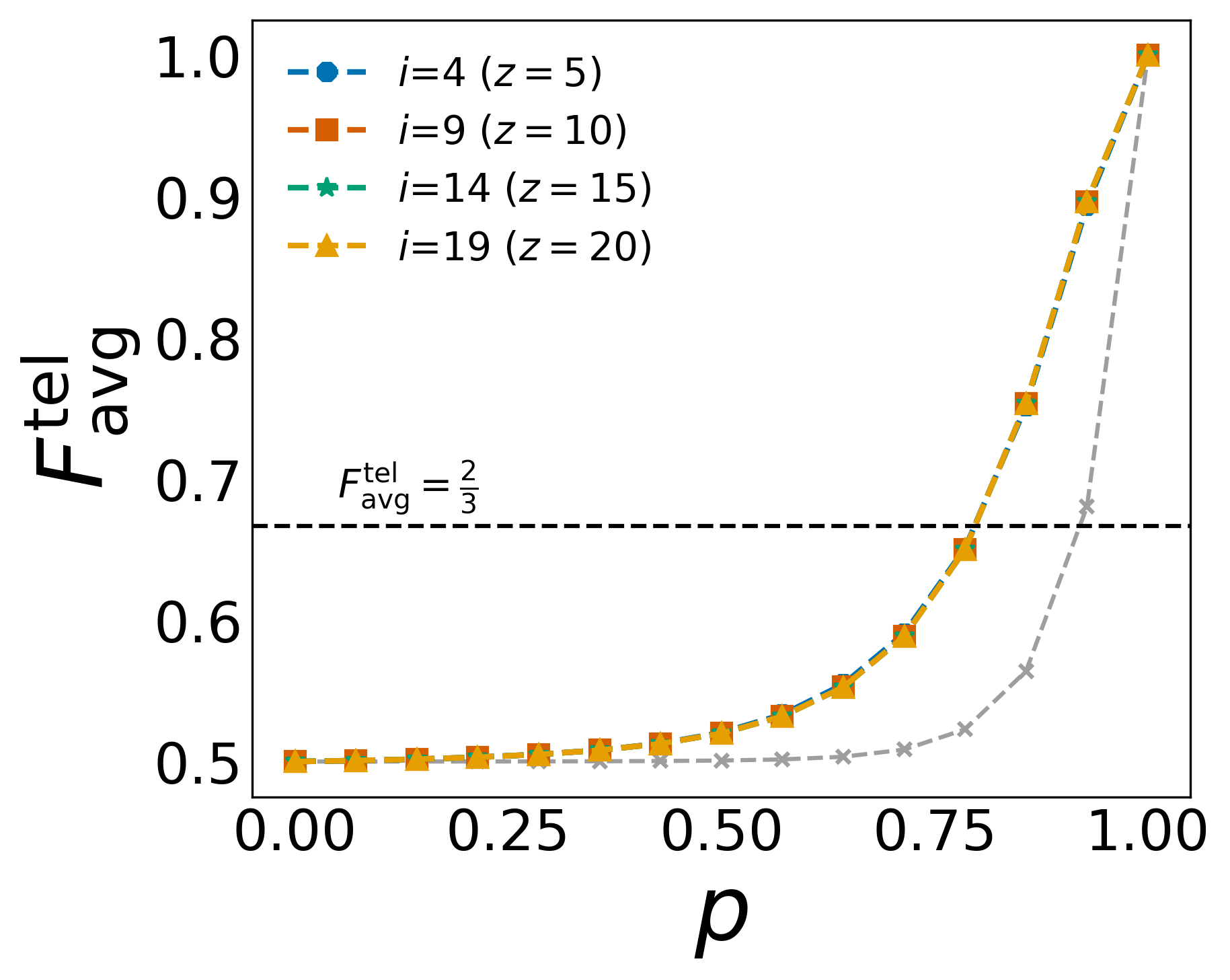}  
    \caption{Plots of $F^{\mathrm{tel}}_{\mathrm{avg}}$ as a function of $p$ with varying $z$ for a triangular lattice network network with $(m,n)=(10,10)$ and $k=k_{\max}$.}
    \label{Fvsp_TLN_saturation}
\end{figure}
%

\subsection{Square Lattice Network (SLN)}
Every node in the square grid that forms the basis of the square lattice network (SLN) has degree 4, except for the boundary. Just like the previous case, we omit border effects and retain the full symmetry by considering an infinite lattice. The boundary effects become insignificant for sufficiently large finite SLN. 

\begin{figure}[h]
    \centering
    \includegraphics[scale=0.15]{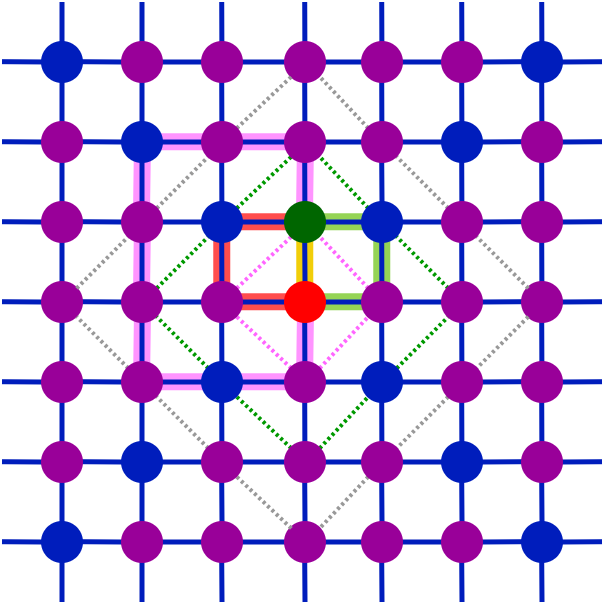}
    \hspace{1cm}
    \includegraphics[scale=0.15]{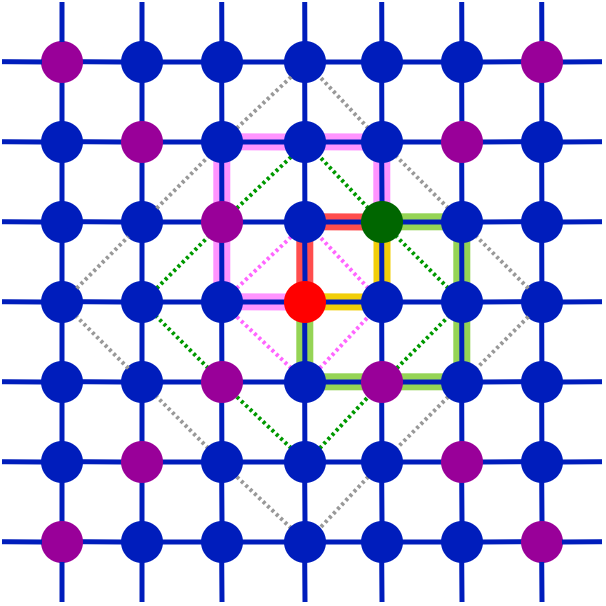}
    \\[1ex]
    (a) Non-diagonal \hspace{2cm} (b) Diagonal
    
    \caption{Purification procedure for an infinite square lattice network (SLN) for $k=1$. A fixed source node (marked in red) is considered, and the paths are determined by the position of the target (marked in green) node. Two distinct classes of target positions arise, each associated with different path lengths, which are depicted in different colours for each case.}
    \label{fig:SLN}
\end{figure}

\subsubsection{One edge can be used only once, i.e. $k=1$}
As we considered the $k=1$ case, the number of distinct paths and the corresponding path lengths can be determined in terms of the shortest lattice path (SLP) length $l_r$. Additionally, the maximum number of paths ($z_{\max}$) is 4 here because one node has a degree of 4.\\
%
\indent For the SLN, two distinct cases arise, corresponding to diagonal and non-diagonal target nodes \cite{mondal2024}. For a fixed source node $S$, when the target node $T$ is diagonal, there exist four distinct paths, $\mathcal{M}_{\mathrm{D}} = \{l_r, l_{r}, l_{r+4}, l_{r+4}\}$ as shown in Fig \ref{fig:SLN}(a).  On the other hand, when $T$ is non-diagonal, there are also 4 possible paths, i.e. $\mathcal{M}_{\mathrm{AD}} = \{l_r, l_{r+2}, l_{r+2}, l_{r+8}\}$ as depicted in Fig. \ref{fig:SLN}(b).\\
\indent Moreover, for a given SLP $l_r$, the total number of nodes is $4r$. The SLN exhibits a peripheral symmetry, i.e. for odd $r$, only non-diagonal nodes contribute, and hence the number of non-diagonal nodes is $4r$. The fraction of non-diagonal nodes is $\frac{4r}{4r}=1$. Whereas for even $r$, both the diagonal and non-diagonal nodes are present. The total number of diagonal nodes in this case is $4$ (which is always fixed for a given $l_r$), and therefore the total number of non-diagonal nodes is $4r-4$. Hence the fraction of diagonal nodes is $\frac{4}{4r}=\frac{1}{r}$, and fraction of non-diagonal nodes is $\frac{4r-4}{4r}=\frac{r-1}{r}$. If $z$ represents the maximum number of paths utilised in the purifying process, and if the paths are ordered in descending order in accordance with the SPL strategy, we have
%
\begin{eqnarray*}
    F^{\mathrm{tel}}_{r}|_{(S,*)} = 
    \begin{cases}
       \frac{1}{r} F^{\mathrm{tel}}(\mathcal{M}^{(z)}_{\mathrm{D}}|_{\mathrm{SPL}}) \\ +\frac{r-1}{r}F^{\mathrm{tel}}(\mathcal{M}^{(z)}_{\mathrm{AD}}|_{\mathrm{SPL}}), \quad \text{for $r$ = even,}\\
       F^{\mathrm{tel}}(\mathcal{M}^{(z)}_{\mathrm{AD}}|_{\mathrm{SPL}}), \qquad \quad \text{ for $r$ = odd}.
    \end{cases}
\end{eqnarray*}
Here $(S,*)$ suggests that this is valid for a fixed source node. To compute the average, we must include all $(S,T)$ pairs. Hence,
\begin{eqnarray}
    \Ftel = \sum_{r=1}^{r_{\max}} p_r F_r^{\mathrm{tel}}|_{(S,*)},
\end{eqnarray}
where $p_r$ denotes the fraction of node pairs separated by distance $r$ in the SLN, and the maximum value of $r$ is $r_{\max} = 2(n-1)$ if we consider the $N = n \times n$ square grid. The exact value of $p_r$ is difficult to calculate analytically. \\ \\


\subsubsection{One edge can be used maximum number of times i.e. $k=k_{\max}$}
It is quite challenging to calculate the number of unique paths analytically similar to the TLN, since the longest path length for purification for a fixed source can reach up to $2(n-1)$, i.e. $\mathcal{O}(n)$. Consequently, obtaining a closed-form formula for the average teleportation fidelity is challenging. We use numerical results to evaluate the pertinent values and obtain the associated findings.
\subsubsection{Results}
On finite square lattices with equal numbers of rows and columns, i.e., $m=n$, we plot $\Ftel$ as a function of visibility $p$ for both MPEP and without MPEP scenarios. First, we concentrate on the $k=1$ situation. For $(m,n)=(10,10)$ with the maximum number of purification $z=4$, we obtain an advantage using the MPEP protocol over the case where MPEP is not used. It is illustrated in Fig. \ref{Fvsp_SLN_k1}(a). Fig. \ref{Fvsp_SLN_k1}(b) which display the associated relative gains. Furthermore, we found that the MPEP protocol can achieve the quantum advantage at relatively lower values of $p$. The threshold for achieving the quantum advantage is at $p_c \approx 0.8849, 0.8610$, and $0.8494$ for $i=1, 2,$ and $3$, respectively. In contrast, the without MPEP scenario exhibits quantum advantage at $p_c \approx 0.9249$ which much larges than that when MPEP is employed.
\begin{figure}[H]
    \centering
    \includegraphics[scale=0.27]
        {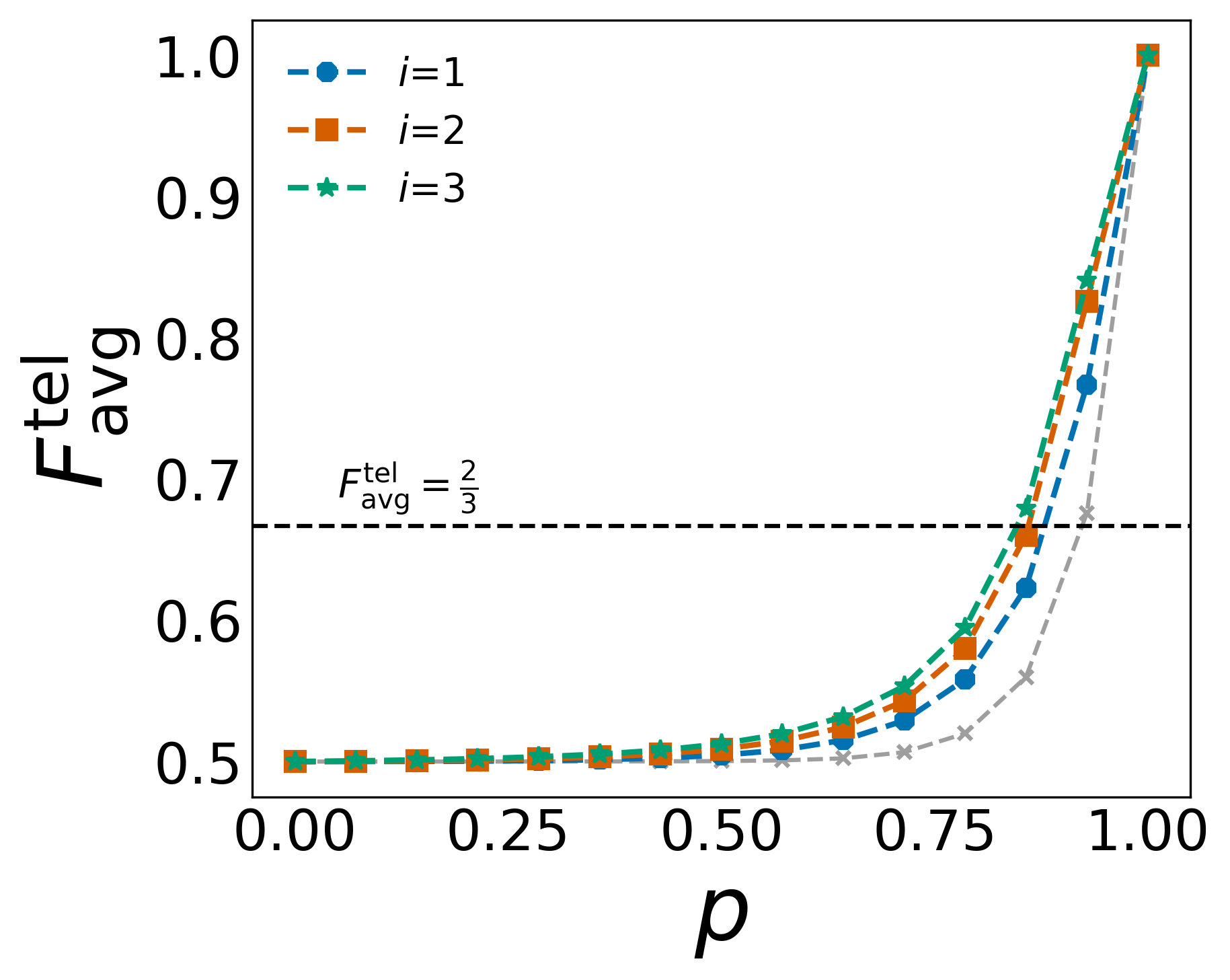}
    \includegraphics[scale=0.27]
        {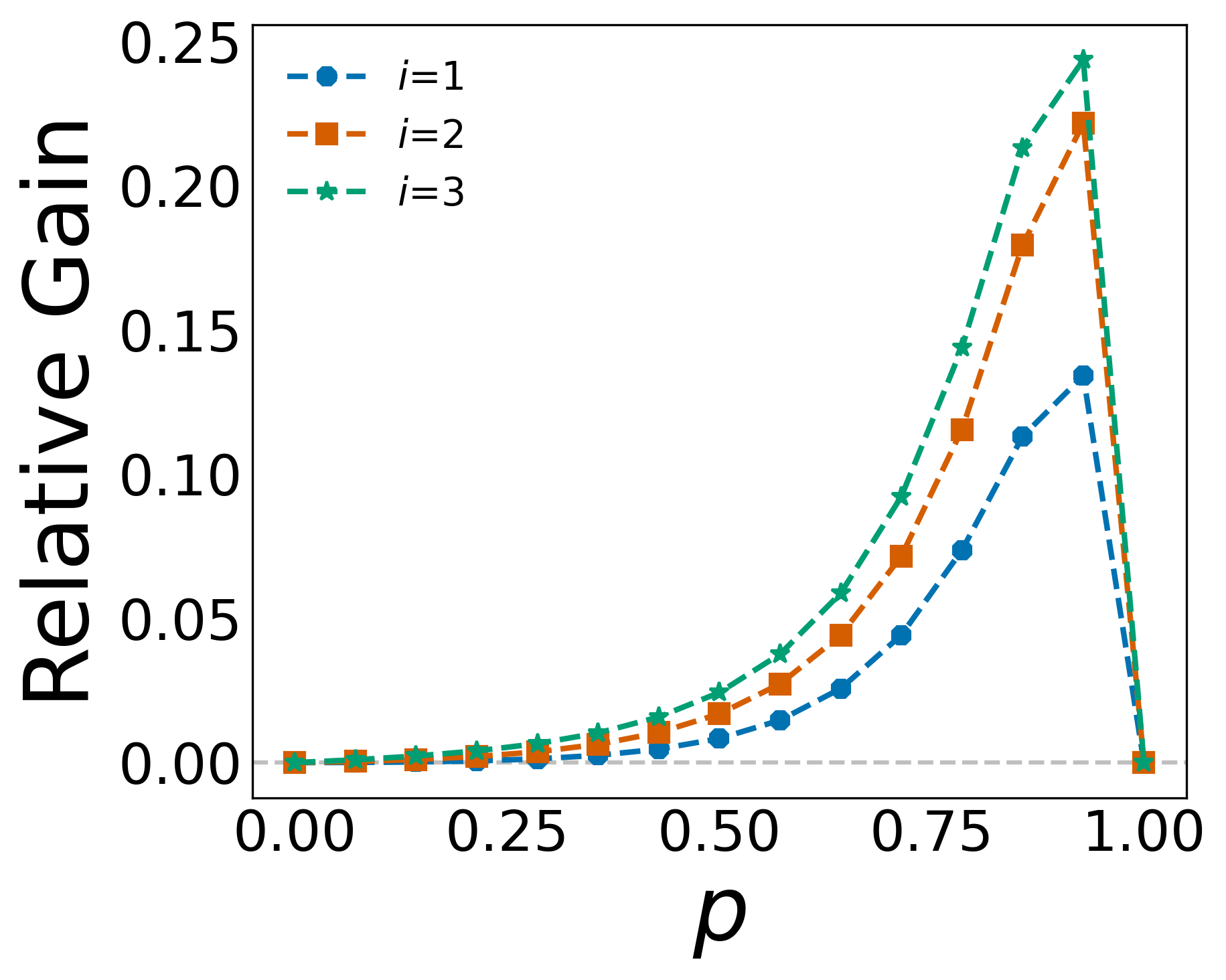}
    \\[1ex]
    \hspace*{0.75cm}(a) \hspace{3.9cm} (b)

    \caption{Plots of (a) $F^{\mathrm{tel}}_{\mathrm{avg}}$ as a function of $p$ and (b) relative gain as a function of $p$ for a square lattice network with $(m,n)=(10,10)$ and $k=1$. The total number of paths is taken to be $z=z_{\max}=4$. The horizontal black line suggests $\Ftel=\frac{2}{3}$. The grey line denotes the variation of $\Ftel$ with respect to $p$ for the without MPEP scenario.}
    \label{Fvsp_SLN_k1}
\end{figure}
\indent For the $k=k_{\max}$ situation, we can also achieve the enhancement for the same SLN at $p \approx 0.2895, 0.3332$, and $0.8118$ for $i=1, 5$, and $10$, respectively. However, for $i=19$, it immediately shows the improvement over the case when MPEP is not used. Therefore, the transition value will be lower with higher number of paths used in the MPEP protocol. The results along with the relative gains are shown in Fig. \ref{Fvsp_SLN_kmax}(a) and Fig. \ref{Fvsp_SLN_kmax}(b), respectively. These findings reveal that the MPEP protocol offers a notable improvement. Moreover, we discover that the proposed algorithm gains a quantum advantage at relatively lower values of $p$. The threshold for achieving the quantum advantage is at $p_c \approx 0.8727, 0.8705, 0.8646,$ and $0.8319$ for $i=1,5,10,$ and $19$, respectively.
\begin{figure}[h]
    \centering
    \includegraphics[scale=0.27]
        {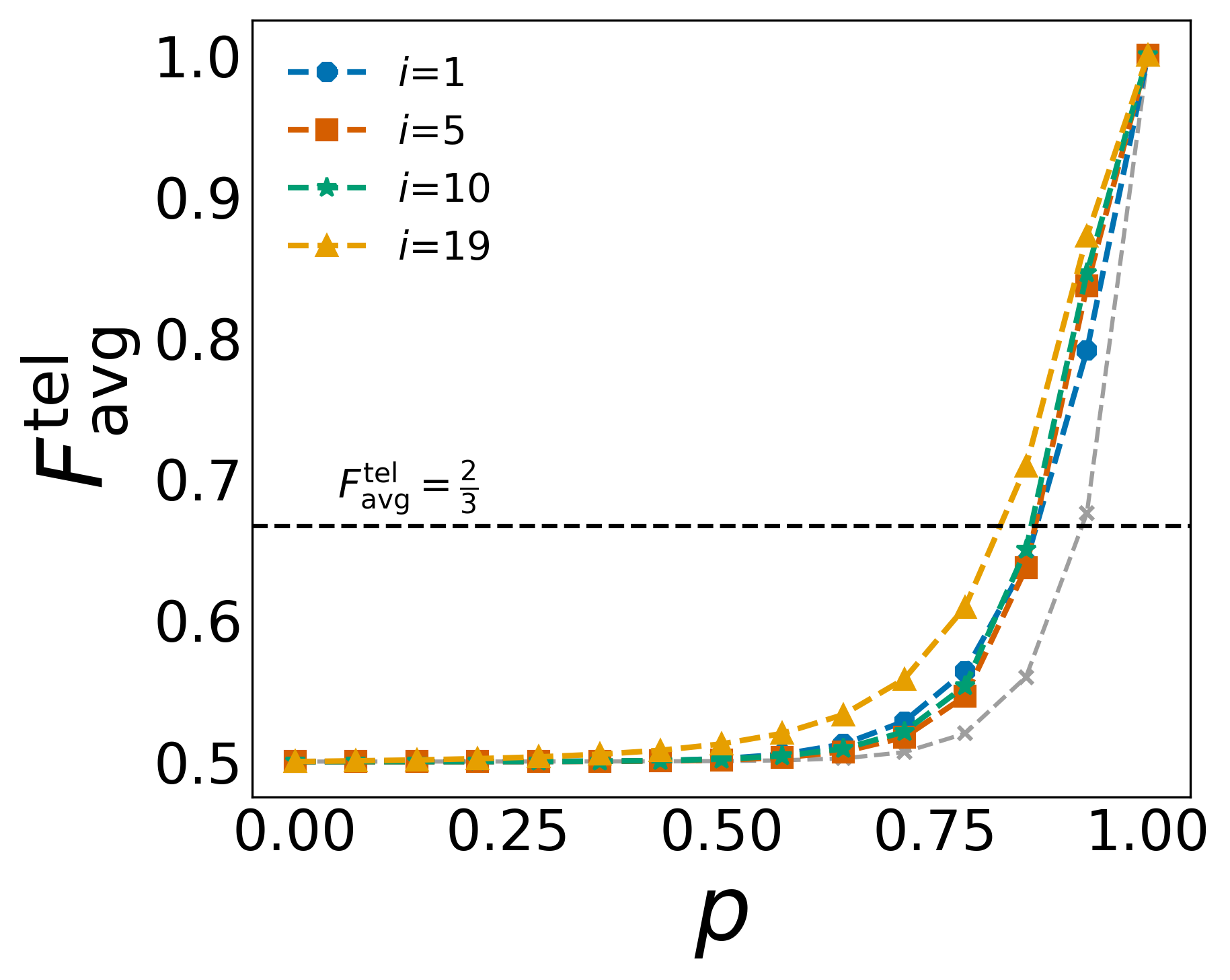}
    \includegraphics[scale=0.27]
        {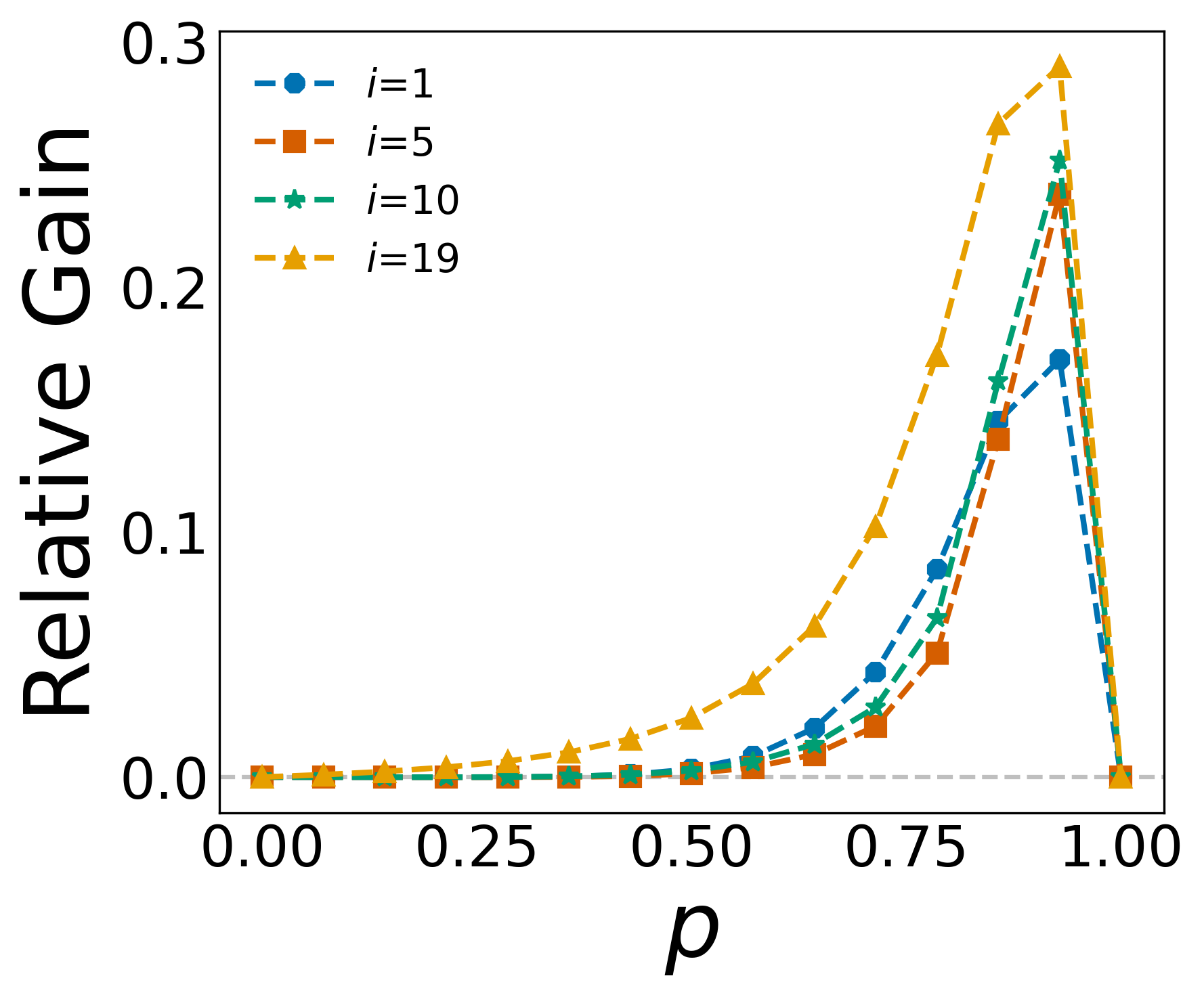}
    \\[1ex]
    \hspace*{0.75cm}(a) \hspace{3.9cm} (b)

    \caption{Plots of (a) $F^{\mathrm{tel}}_{\mathrm{avg}}$ as a function of $p$ and (b) relative gain as a function of $p$ for a square lattice network with $(m,n)=(10,10)$ and $k=k_{\mathrm{max}}$. The total number of paths is taken to be $z=20$. The horizontal black line suggests $\Ftel=\frac{2}{3}$. The grey line denotes the variation of $\Ftel$ with respect to $p$ for the without MPEP scenario.}
    \label{Fvsp_SLN_kmax}
\end{figure}\\
\indent Similar to previous topologies, when $z$ increases, the $\Ftel$ gradually approaches its saturation value, and after a given number of $z$, where $i_{\max}=z-1$. For a SLN with $(m,n)=(10,10)$ and $k=1$, the maximum value of $z$ is 4. Therefore, $\Ftel$ has already reached its maximum value in this scenario. Moreover, we found that the change in $\Ftel$ is roughly $10^{-4}$ after $z=7$ and $i=6=i_{\max}$ for the same SLN, but for the case $k=k_{\max}$. Fig. \ref{Fvsp_SLN_saturation} illustrates how $\Ftel$ saturates as $z$ rises. Consequently, we can conclude that increasing the number of $z$ and $i$ does not significantly affect $\Ftel$.
\begin{figure}[h]
    \centering
    \includegraphics[scale=0.27]{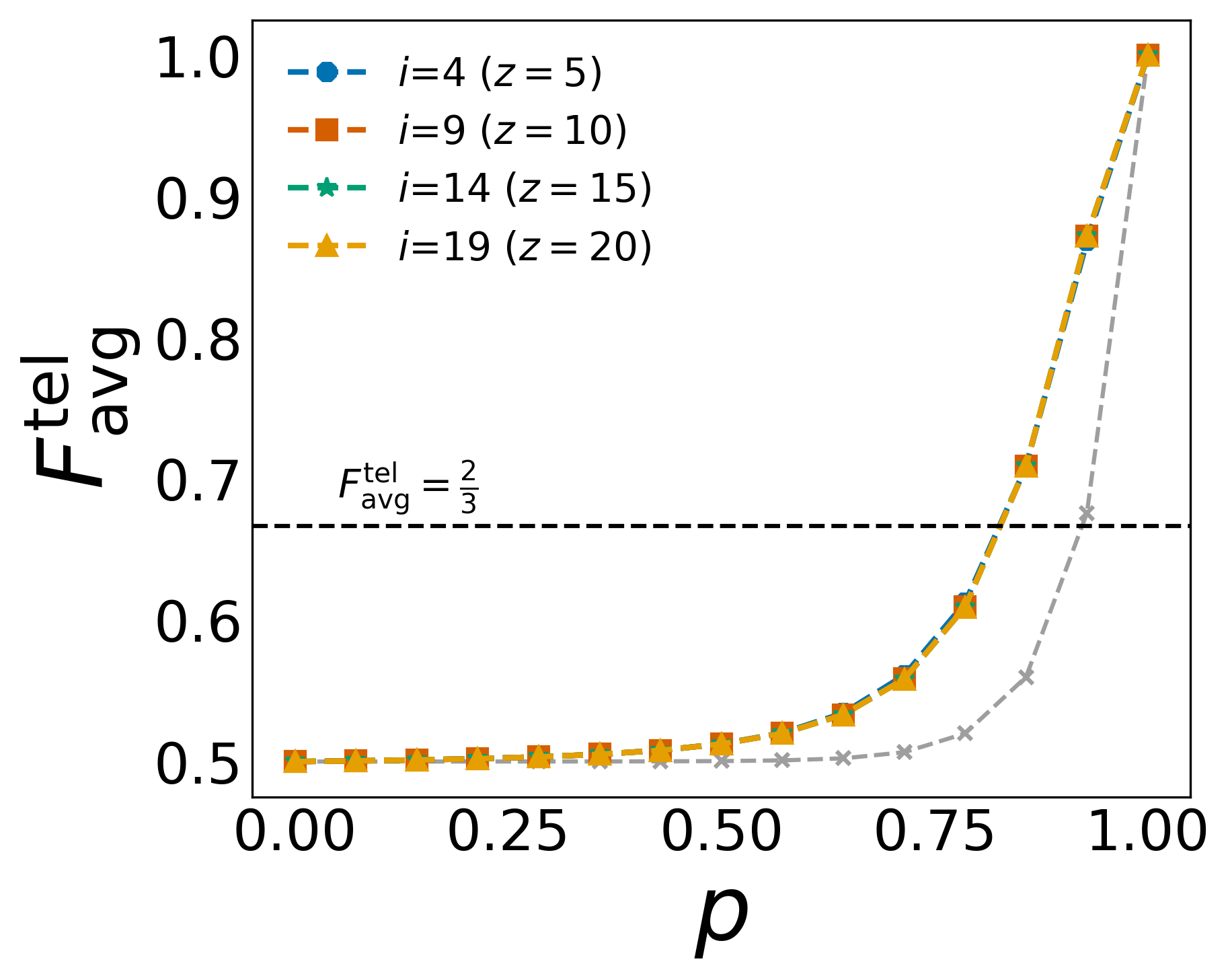}  
    \caption{Plots of $F^{\mathrm{tel}}_{\mathrm{avg}}$ as a function of $p$ with varying $z$ for a square lattice network network with $(m,n)=(10,10)$ and $k=k_{\max}$.}
    \label{Fvsp_SLN_saturation}
\end{figure}
%
\section{Purification with Irregular topologies}
\label{sec5}
In this section, we investigate the feasibility of MPEP protocol for enhancing the average teleportation fidelity of irregular network topologies.  We present the scaling of the average teleportation fidelity and relative gain with respect to visibility $p$ of the entangled states distributed along the network edges, and provide a detailed comparison of the utilization of MPEP protocol over single path entangelement distribution protocol without MPEP.\\
\indent An example of irregular topology is a random network. A random network is a network in which the edges between nodes are generated at random according to a specified probability distribution. One of the related models of generating a random network is the Erd\H{o}s-R\'enyi network (ERN) \cite{erdos1959}. An ERN can be constructed by connecting labelled nodes randomly, where an edge is included in the graph with a probability $q \in [0,1]$, independently of every other edge. However, this type of graph disagrees with the real-world networks, which often tend to follow the scale-free distribution.
\begin{figure}[h]
    \centering
    \includegraphics[scale=0.27]{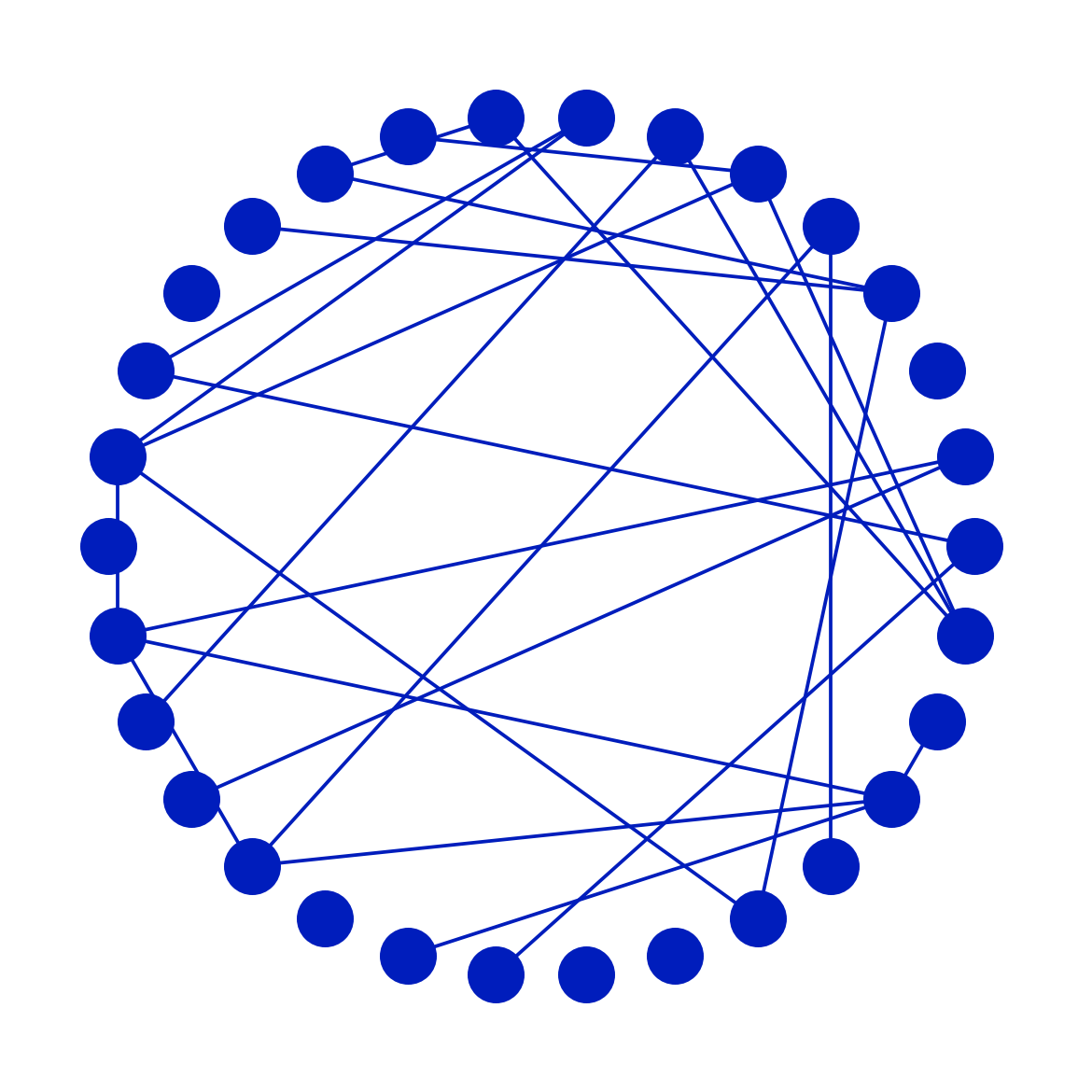}
    \includegraphics[scale=0.27]{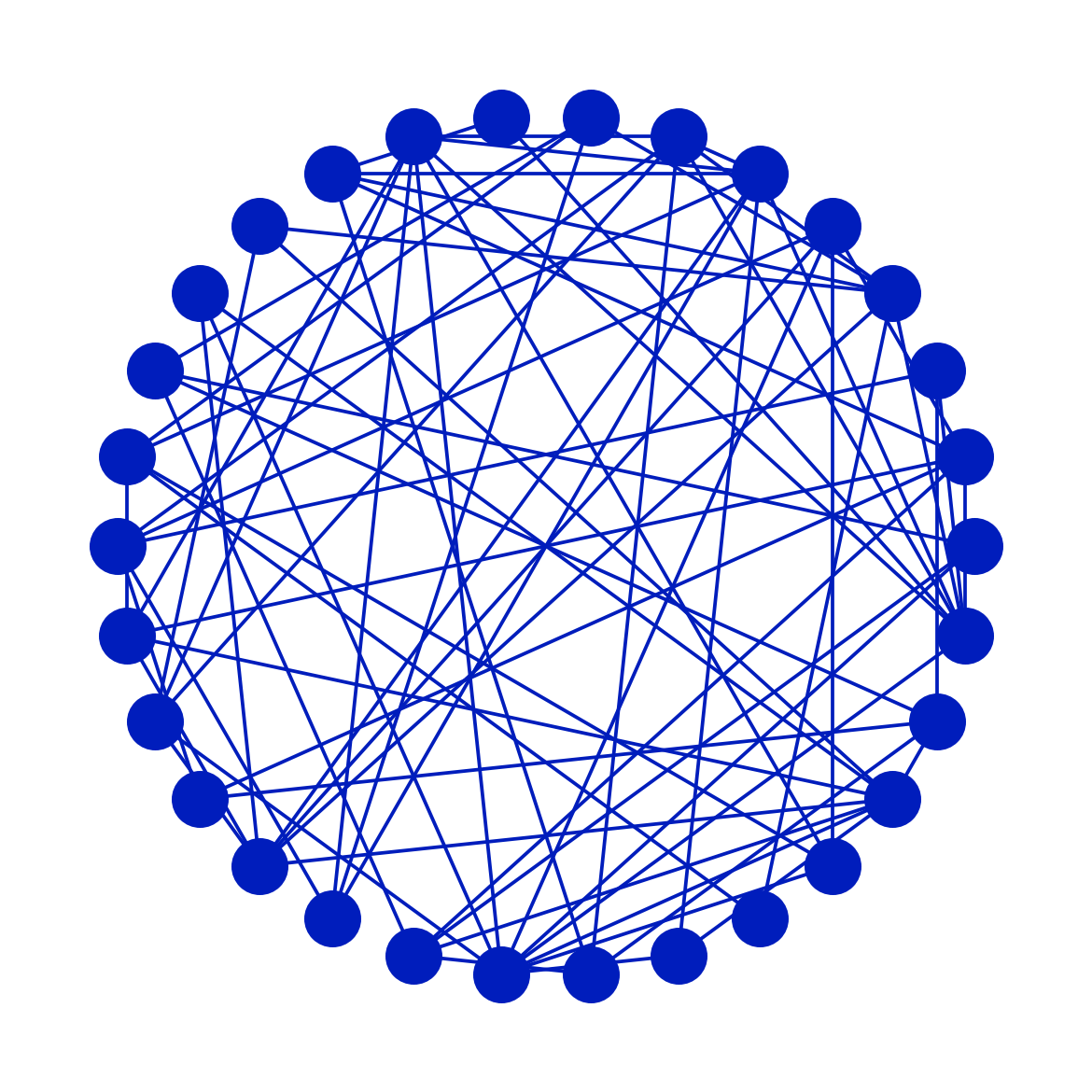}
    \includegraphics[scale=0.27]{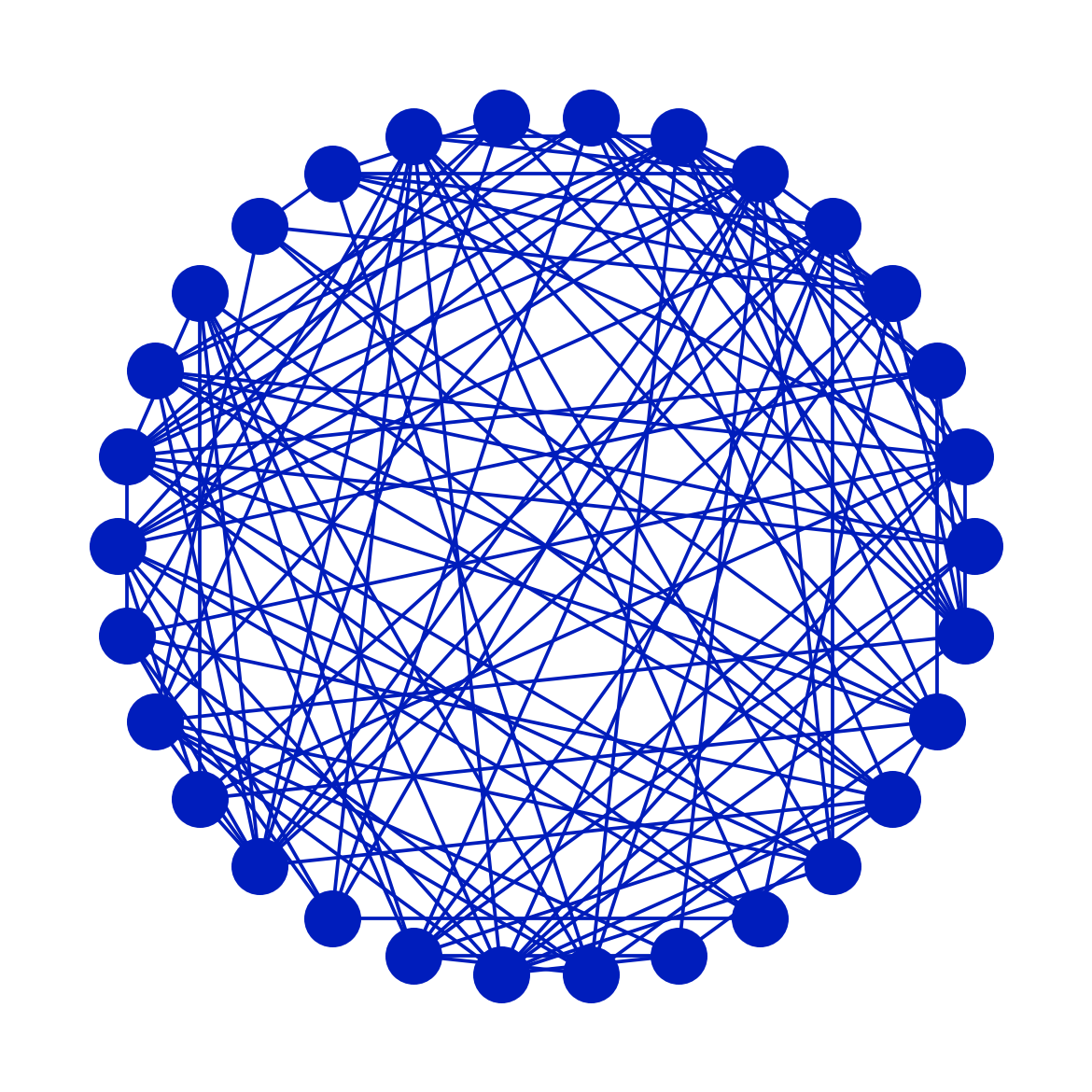}
    \\[1ex]
    \hspace*{0.08cm}(a) $E=25$ \hspace{1.1cm} (b) $E=75$ \hspace{1.0cm} (c) $E=125$
    
    \caption{Erd\H{o}s--R\'enyi network (ERN) with $N=30$. The graphs correspond to a single realization with varying numbers of edges.}
    \label{fig:ER}
\end{figure}\\
\indent There are two standard representations of ERN: $G(N,q)$, where $N$ denotes the number of nodes and $q$ is the edge probability, and $G(N,E)$, where $E$ is the total number of edges. In the $G(N,q)$ model, the number of edges varies across different realization of the graph, whereas in the case of $G(N,E)$ model, the total number of edges is fixed. In this article, we adopt the $G(N,E)$ model because it is more suitable for implementing our proposed algorithms. One example is shown in Fig. \ref{fig:ER}, where the edge probability $q$ is varied for a fixed number of nodes.\\

%
\subsubsection*{Results}
Due to the randomness of the graph, it is difficult to achieve a closed-form expression for $k=1$ as well as $k=k_{\max}$ case. Therefore, we turned to the numerical results for this topology. To include this statistical robustness, we generated the ERN with a fixed number of nodes and edges, 25 times $(T_1,T_2,\hdots,T_{25})$ for each fixed $q$, and for each time the corresponding $\Ftel$ is calculated. We then evaluated the average of all these trials to obtain $\langle\Ftel\rangle_{T_1,\hdots,T_{25}}$.\\
\indent Let us concentrate $k=1$ case at first. Here, we vary the total number of paths $z$ for each instance, and we consider the number of purifications $i = i_{\max} = z-1$. We find that the average teleportation fidelity $\Ftel$ with MPEP approach surpasses that when MPEP is not employed at the values $p \approx 0.7191, 0.7232, 0.7284, 0.73,$ and $ 0.7305$ for $z = 2, 3, 4, 5$, and $6$ respectively. The value of $\Ftel$ achieves the saturation after $z=5$. These findings, along with the variation of relative gains, as defined on Eq. \eqref{rg}, are shown in Fig. \ref{Fvsp_ERN_k1}(a) and \ref{Fvsp_ERN_k1}(b). These findings show that the proposed algorithms improve $\Ftel$.   
\begin{figure}[h]
    \centering
    \includegraphics[scale=0.27]
        {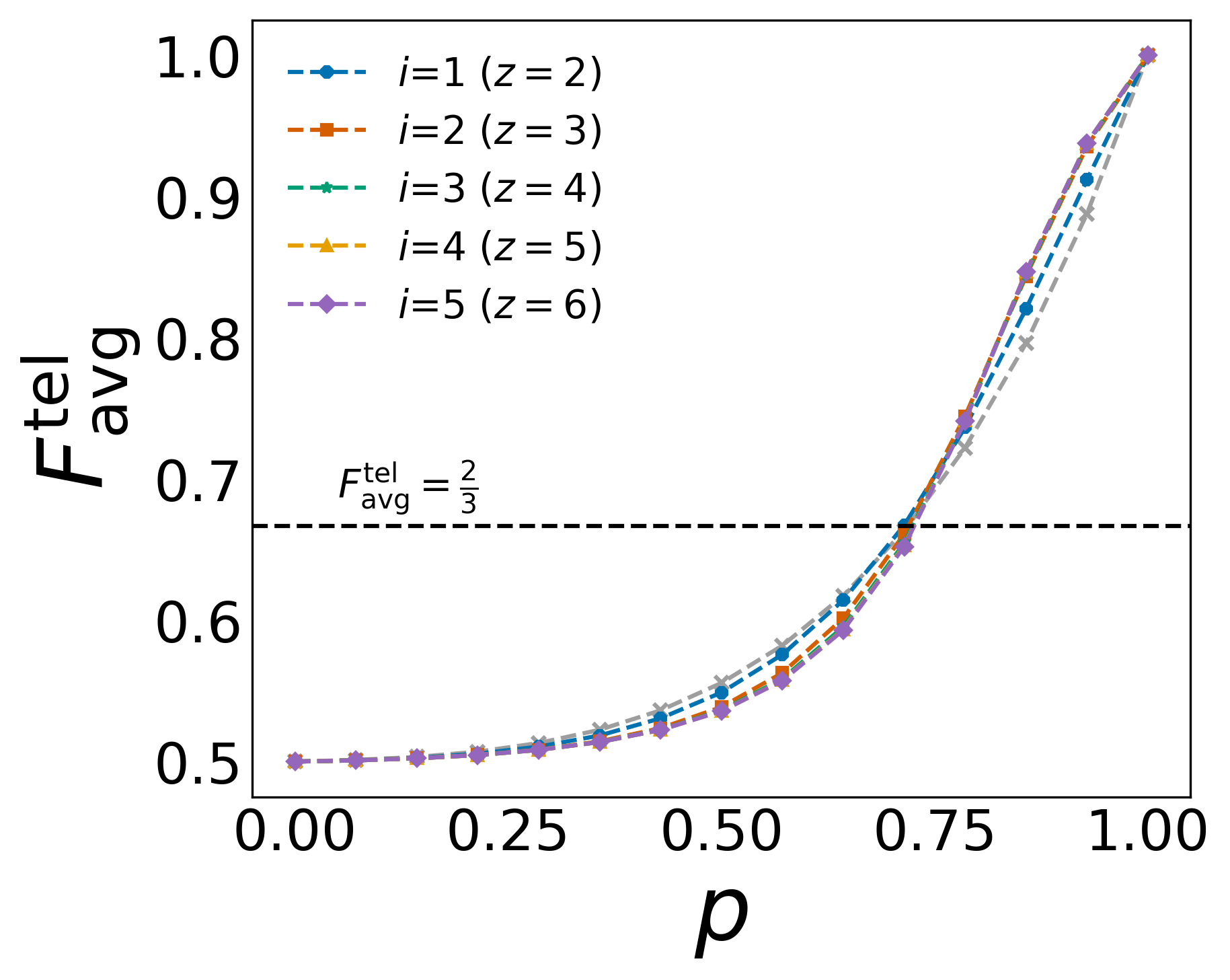}
    \includegraphics[scale=0.27]
        {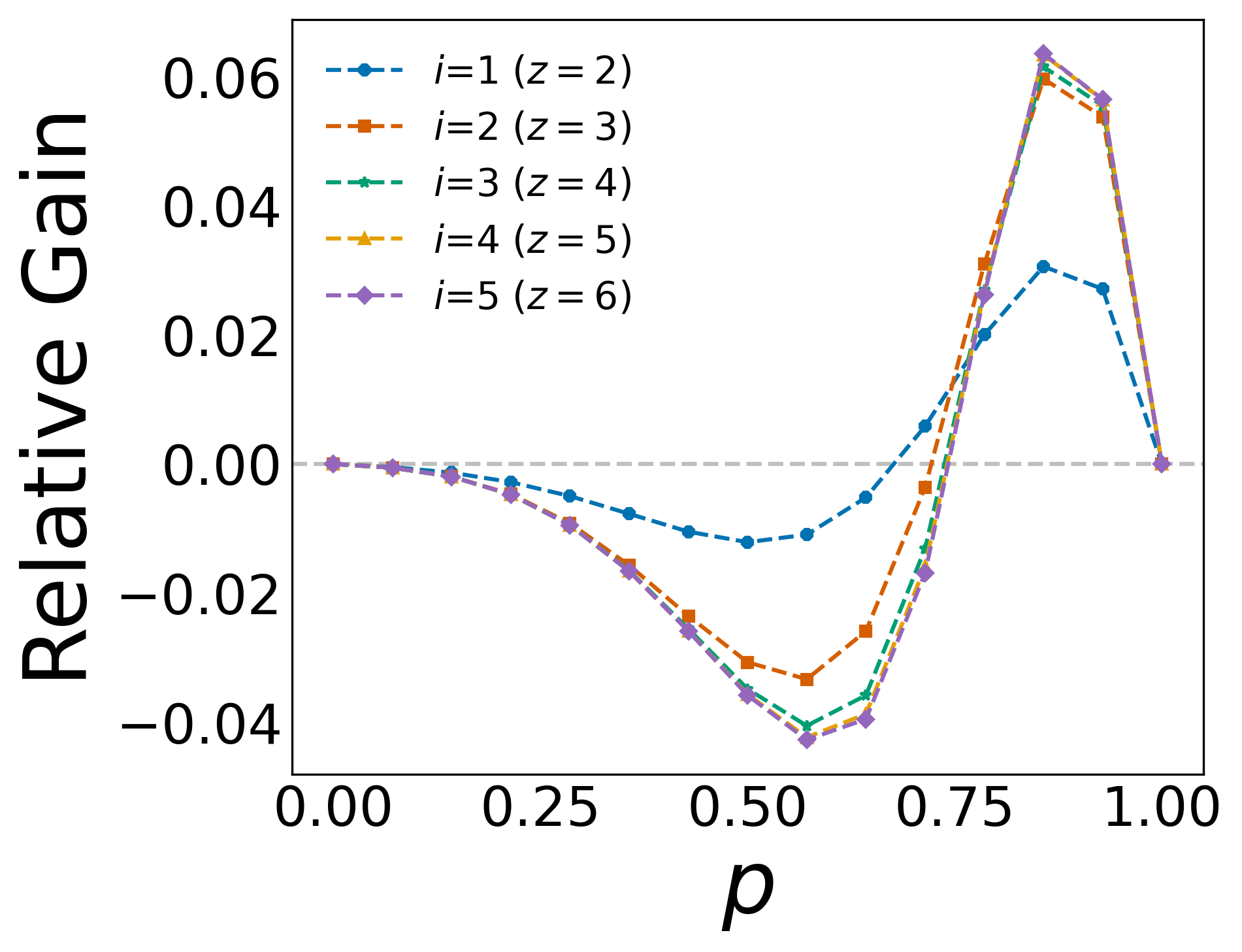}
    \\[1ex]
    \hspace*{0.75cm}(a) \hspace{3.9cm} (b)
    \caption{Plots of (a) $F^{\mathrm{tel}}_{\mathrm{avg}}$ as a function of $p$ and (b) relative gain as a function of $p$ for an Erd\H{o}s-R\'enyi network with $N=100$ and $E=250$ where $k=1$. The horizontal black line suggests $\Ftel=\frac{2}{3}$. The grey line denotes the variation of $\Ftel$ with respect to $p$ for the without MPEP scenario.}
    \label{Fvsp_ERN_k1}
\end{figure}\\
\indent For $k=k_{\max}$, we can obtain the improvement for the same ERN at $p \approx 0.7135, 0.7176, 0.7237, 0.7259,$ and $ 0.7277$ for $z = 2, 3, 4, 5$ and $6$. Fig. \ref{Fvsp_ERN_kmax}(a) and \ref{Fvsp_ERN_kmax}(b) illustrate these observations, as well as the variation of relative gains with the $p$. These findings indicate that the proposed techniques improve $\Ftel$ from the without MPEP method.  %
\begin{figure}[h]
    \centering
    \includegraphics[scale=0.27]
        {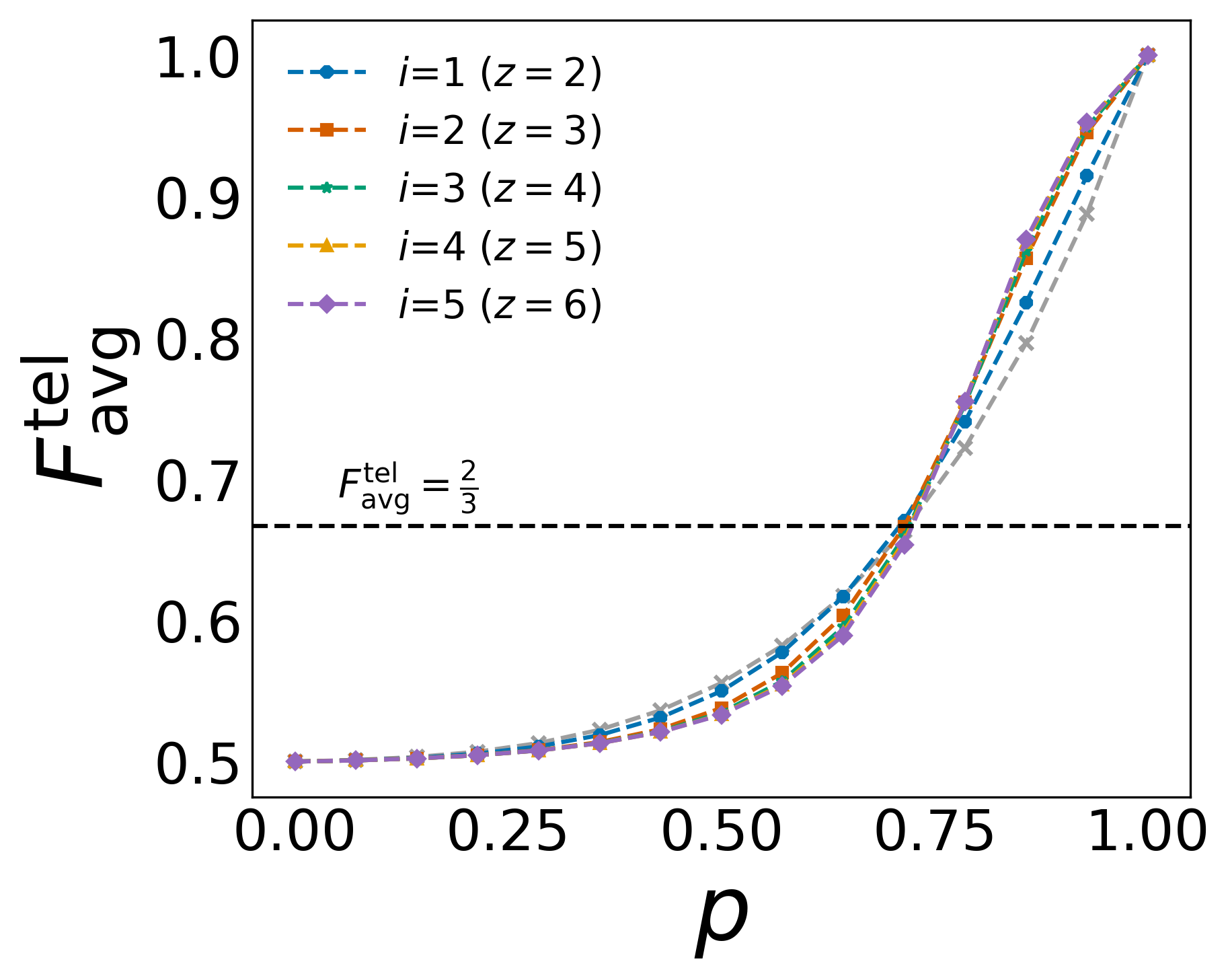}
    \includegraphics[scale=0.27]
        {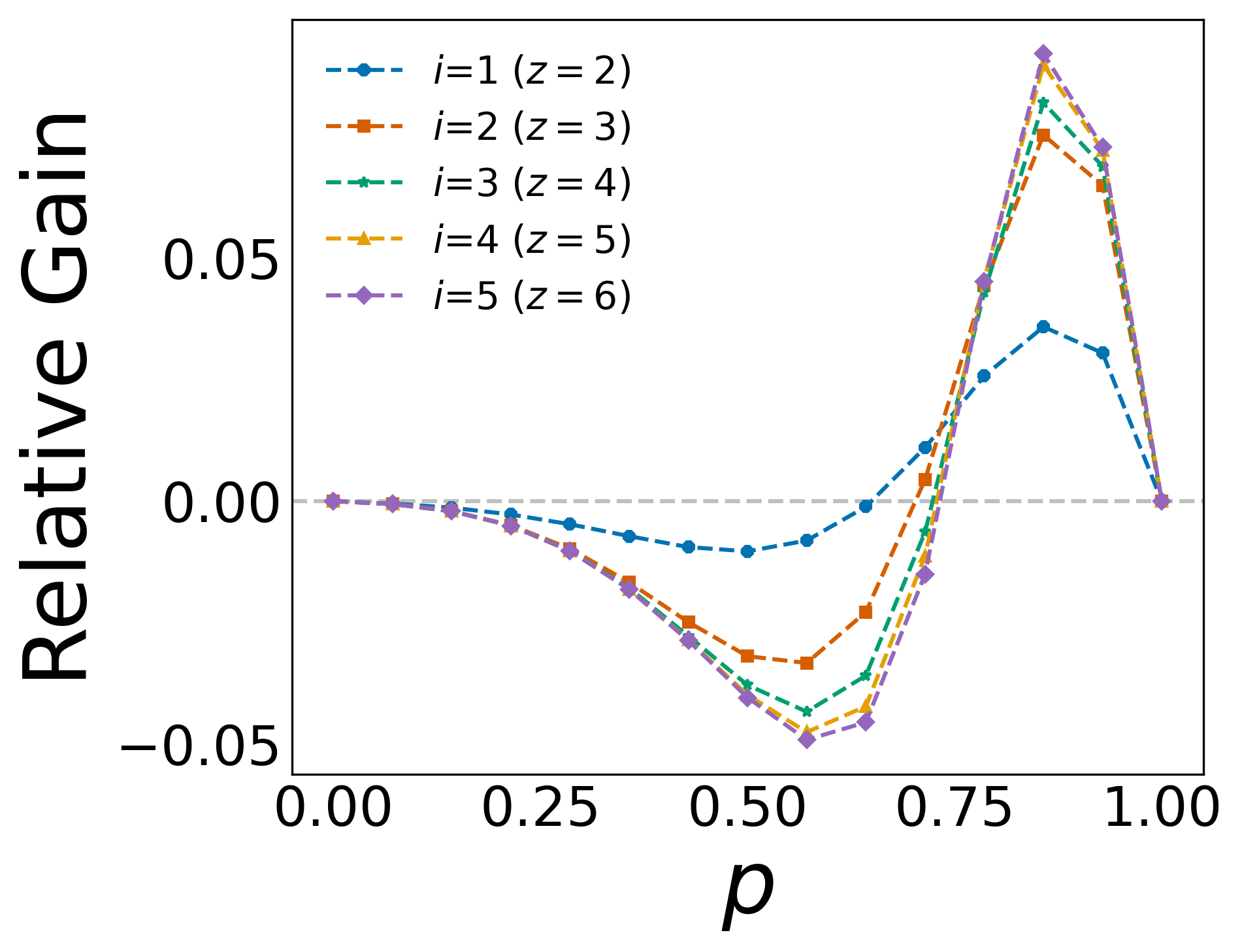}
    \\[1ex]
    \hspace*{0.75cm}(a) \hspace{3.9cm} (b)
    \caption{Plots of (a) $F^{\mathrm{tel}}_{\mathrm{avg}}$ as a function of $p$ and (b) relative gain as a function of $p$ for an Erd\H{o}s-R\'enyi network with $N=100$ and $E=250$ where $k=k_{\mathrm{max}}$. The horizontal black line suggests $\Ftel=\frac{2}{3}$. The grey line denotes the variation of $\Ftel$ with respect to $p$ for the without MPEP scenario.}
    \label{Fvsp_ERN_kmax}
\end{figure}\\
\indent In the case of ERN, the $\Ftel$ approaches saturation after a certain number of $z$, where $i_{\max}=z-1$, similar to the regular topologies. Additionally, as $z$ increases, the paths with larger pathlengths appear in the path set. As a result, $\Ftel$ decreases gradually. However, we discovered that the change in $\Ftel$ is of the order of $10^{-4}$ after $z=6$ and $i=5=i_{\max}$ when $k=1$, and $z=7$ and $i=6=i_{\max}$ for $k=k_{\max}$. These results are illustrated in Fig. \ref{Fvsp_ERN_k1}(a) and Fig. \ref{Fvsp_ERN_kmax}. Hence, we may conclude that increasing the quantity of $z$ and $i$ has no substantial effect on $\Ftel$.\\

\section{Conclusion}
\label{sec6}
In this article, we propose two different algorithms to evaluate the average teleportation fidelity, $\Ftel$, based on two MPEP strategies. We show that this approach improves the value of $\Ftel$ by calculating the analytical expression for the topologies with loops, including the ring network, the complete graph network, the triangular lattice network, and the square lattice network. We further examine these findings for both the irregular topology of the Erd\H{o}s–R\'enyi network and the previously discussed regular topologies.
Furthermore, we prove that the shortest path last (SPL) strategy leads to a better value of $\Ftel$ than the shortest path first (SPF) strategy. Considering all the cases, we find that the $\Ftel$ is greater than that obtained without the MPEP approach for appropriate values of the isotropic state parameter $p$ and a finite number of purification stages. We also show the variation of the associated relative gain with $p$. Moreover, for lattice networks, the proposed algorithms can achieve the quantum advantage for a relatively lower value of $p_c$ than without MPEP method.  When the network has a limited number of entangled states as resources between two nodes, the edges can not be used multiple times for different paths. We also show that the calculation of $\Ftel$ without MPEP strategy can be outperformed by using a single edge only once, along with a suitable purification approach for a CGN. For TLN and SLN, increasing the number of purifications along with the total number of paths does not enhance the $\Ftel$ indefinitely, as it reaches saturation after a certain number of steps. On the other hand, we observe the saturation of $\Ftel$ in the case of ERN also. These findings show that even in the absence of resources, the MPEP approach yields better fidelity than without MPEP about the effective distribution of resources in quantum networks.
These findings suggest that MPEP techniques can greatly enhance quantum network topologies' capacity to transfer quantum information. 

\section*{Acknowledgement}
S.R. would like to acknowledge DST-India for the INSPIRE Fellowship (IF220695) support.

\bibliography{ref.bib}
\newpage
\onecolumngrid
\appendix

\section{Derivation of the formula of maximum teleportation fidelity $F^{\mathrm{tel}}_{\mathrm{max}}$}
\label{appA}

Let's begin with the representation of the state $\rho$ in the Bell basis, which acts on the Hilbert space $\mathcal{H}=\mathcal{H}_1 \otimes \mathcal{H}_2$.  
\begin{eqnarray}
    \rho = A \dyad{\phi^+}+B \dyad{\phi^-}+C \dyad{\psi^+}+D \dyad{\psi^-},
    \label{eq:A1}
\end{eqnarray}
where  $\ket{\phi^\pm} = \frac{1}{\sqrt{2}} = \ket{00} \pm \ket{11}$, $\ket{\psi^\pm} = \frac{1}{\sqrt{2}} = \ket{01} \pm \ket{10}$. From \cite{horodecki96}, the maximum teleportation fidelity is given by
\begin{eqnarray}
    F^{\mathrm{tel}}_{\mathrm{max}} = \frac{1}{2}\lrfb{1+\frac{1}{3}\Tr\sqrt{T^\dagger T}},
    \label{eq:A2}
\end{eqnarray}
where the entries of $T$ matrix is given by $t_{ij} = \Tr(\rho \sigma_i \otimes \sigma_j)$. For the state Eq. \eqref{eq:A1}, the matrix is $T = \diag(A-B+C-D, -A+B+C-D, A+B-C-D) = T^\dagger$ which leads to $\sqrt{T^\dagger T} = \diag(A-B+C-D, A-B-C+D, A+B-C-D)$ and hence $\Tr\sqrt{T^\dagger T} = 3A-B-C-D$. Finally, from Eq. \eqref{eq:A2},
\begin{eqnarray}
    F^{\mathrm{tel}}_{\mathrm{max}} = \frac{1}{2}\lrtb{1+\frac{1}{3}(3A-B-C-D)}.
    \label{eq:A3}
\end{eqnarray}
In case of a bipartite isotropic state $\rho = \frac{1-p}{4}\mathbb{I}+p\dyad{\Phi^+} = \frac{1+3p}{4}\dyad{\Phi^+}+\frac{1-p}{4}(\dyad{\Phi^-}+\dyad{\Psi^+}+\dyad{\Psi^-})$, the value of maximum teleportation fidelity is
\begin{eqnarray}
    F^{\mathrm{tel}}_{\mathrm{max}} = \frac{1}{2}\lrtb{1+\frac{1}{3}(3A-B-C-D)} = \frac{1}{2}(1+p).
\end{eqnarray}

\end{document}